  \providecommand\BibTeX{{%
    \normalfont B\kern-0.5em{\scshape i\kern-0.25em b}\kern-0.8em\TeX}}}
\definecolor{brandeisblue}{rgb}{0.0, 0.44, 1.0}
\newtheorem{remark}{Remark}
\def\VT{\mathit VT}
\def\FR{\mathit FR}
\def\CP{\mathit CP}
\def\A{\mathcal A}
\def\isdef{\mbox {$\ \stackrel{\rm def}{=} \ $}}
\def\LoR{\mathit LoR}
\def\GLoR{\mathit G(LoR)}
\def\TCB{\mathit TCB}
\def\RCB{\mathit RCB}
\def\ARA{\mathit ARA}
\def\SHA{\mathit SHA}
\def\pen{\varphi}
\begin{document}

%%
%% The "title" command has an optional parameter,
%% allowing the author to define a "short title" to be used in page headers.
\title{The Loop of the Rings: A Fully Decentralized Secure Cooperative System}

%%
%% The "author" command and its associated commands are used to define
%% the authors and their affiliations.
%% Of note is the shared affiliation of the first two authors, and the
%% "authornote" and "authornotemark" commands
%% used to denote shared contribution to the research.

%########################################
%############## Authors #################
%########################################
\author{Arash Vaezi}
%\authornote{Both authors contributed equally to this research.}
\email{avaezi@sharif.edu}
\orcid{0000-0003-4798-0029}
%\author{G.K.M. Tobin}
% #11 %\authornotemark[1]
%\email{webmaster@marysville-ohio.com}
\affiliation{
 \institution{Department of Computer Science and Engineering}
  %\streetaddress{P.O. Box 1212}
 % \city{Dublin}
 % \state{Ohio}
 \country{Sharif University of Technology}
 % \postcode{43017-6221}
}

\author{Amir Daneshgar}
\affiliation{
\institution{Department of Mathematical Sciences, Sharif University of Technology}
  %\streetaddress{1 Th{\o}rv{\"a}ld Circle}
  %\city{Hekla}
% #11 %  \country{Iran}}
\email{daneshgar@sharif.edu}
}

% #11 %\author{Seyed Mohammad Hussien Kazemi}
% #11 %\affiliation{%
% #11 %  \institution{Sharif University of Technology}
  %\city{Rocquencourt}
% #11 %  \country{Iran}
% #11 %}
%########################################
%######### End of Authors ###############
%########################################

%%
%% By default, the full list of authors will be used in the page
%% headers. Often, this list is too long, and will overlap
%% other information printed in the page headers. This command allows
%% the author to define a more concise list
%% of authors' names for this purpose.
%$$$%\renewcommand{\shortauthors}{Arash Vaezi, et al.}

%%
%% The abstract is a short summary of the work to be presented in the
%% article.
\begin{abstract}
 We introduce $\LoR$, a secure, fully decentralized, and distributed cooperative system, where $\LoR$ stands for "the Loop of the Rings". 
 Distinct from conventional transaction-oriented systems, $\LoR$ prioritizes {\it cooperation} using its ring-based structure, making it possible to be used both as a cooperative workspace as well as a versatile platform for service provisioning, accommodating various roles such as freelancers, IoT management systems, and even managing 5G-related services.  Within this system, users have access to a secure and reliable environment, enabling them to offer a specific set of services to a potentially vast number of users. Our main contribution is to introduce the new structure along with its operating rules and principles, applying a combination of randomized procedures, in such a way that the whole system can be modeled fairly accurately in mathematically rigorous terms. This, in particular, is used to provide rigorous proof for the facts that $\LoR$ is both reliable and secure and that it may be efficiently implemented based on its typical communication complexity.   
 %% 30 %%% \hl{The system's innovation lies in its elegantly designed features, encapsulating its unique characteristics.}
\end{abstract}

%%
%% The code below is generated by the tool at http://dl.acm.org/ccs.cfm.
%% Please copy and paste the code instead of the example below.
%%

%%
%% Keywords. The author(s) should pick words that accurately describe
%% the work being presented. Separate the keywords with commas.
\keywords{Decentralized System,
Cooperative Environment, Secure, Reliable
Distributed System, Randomized Techniques}

%% A "teaser" image appears between the author and affiliation
%% information and the body of the document, and typically spans the
%% page.

%%
%% This command processes the author and affiliation and title
%% information and builds the first part of the formatted document.
\maketitle

\section{Introduction}
\label{sec:introduction}
Within the realm of collaboration systems, Douglas Engelbart~\cite{Douglas1962} pioneered the concept of collaborative computing.
A collaborative system can belong to a family of distributed applications. 
%% #31 %%%  \hl{broader family} 
A collaborative system is usually designed to facilitate individuals working together towards a common task and achieving their shared objectives.
In such systems, there exists a critical need for close interaction among participants, involving the sharing of information and resources, exchanging work requests, and keeping track of each other's progress.
%\hl{$\LoR$ is a secure collaborative, distributed system in which various groups of collaborators are formed, where each focuses on a specific job.}

%%%%%%%%%%%%%%%%%%%%%%%%%%
%%%%%%%%%%%%%%%%%%%%%%%%%%
%%%%%%%%%%%%%%%%%%%%%%%%%%

A variety of distributed cooperative systems has been proposed and studied in prior research~\cite{smartmicrogrids,multimedia,1045041}. In this context the underlying principle involves creating an interactive environment where users collaborate to achieve shared objectives. Such systems may encompass freelancing platforms, Internet of Things (IoT) networks, or even telecommunication networks. 
%#1% \hl{Earlier instances of these applications might utilize blockchain technology or not. }

In 2008 Bocek et al.~\cite{Gametheoretical} designed a reputation-based incentive scheme for large-scale, fully decentralized peer-to-peer collaboration networks to encourage the participants to share their resources such as bandwidth and storage space and to edit and vote for documents that are shared to make history better and increase the reputation to get better services. %This design is not blockchain-based.
In 2009 Bocek et al.~\cite{Voting} presented PeerVote, a decentralized voting mechanism in a peer-to-peer collaboration application. PeerVote is not blockchain-based and offers a strategy to maintain the quality of documents after each modification in the presence of malicious peers.
%%%%%%%%%%%%%%%%%%%%%%%%%%%%%%%%%%%%%%%%%%%%%%%%%%%%%%%
In 2019 M. Gandhi \emph{et al}.~\cite{GandhiFreelanceSystem} presented a practical implementation of the decentralized freelancing system based on blockchain named HireChain. Later, in 2021 I. Afrianto et al.~\cite{FreelanceSystem} introduced a prototype model of a freelance market system using blockchain technology based on smart contracts. Recently in 2022, K.~S.~Shilpa et al.~\cite{BlockchainFreelanceSystem} investigated how a freelancer marketplace can be implemented using Ethereum, blockchain, and smart contracts. 
These three works (HireChain and the next two ones) aimed to solve current issues of freelancing platforms, including unreliability, late payment, and delayed service.

In 2019 Bo Tang et al.~\cite{IoTPassport} published a blockchain-based trustful framework for collaborative IoT, named IoT Passport. It allows IoT platforms to construct arbitrary trustful relationships with one another, with precise criteria for intended partnerships enforced by a mix of smart contracts.
By the development of faster wireless technologies, such as 5G~\cite{shariatmadari2015machine}, IoT is anticipated to be quite popular.

Two other systems that fit into the category of cooperative systems are Coopedge and Algorand.
Cooperative edge computing is the key to full exploration and utilization of the power of edge computing. CoopEdge, introduced by Liang Yuan \emph{et al}.~\cite{Cooperative} is the first attempt to drive and support cooperative edge computing based on blockchain. CoopEdge has three main components, i.e., incentive mechanism, reputation system, and consensus mechanism. A potential security threat to the reputation system hired by CoopEdge is that a malicious edge server may try to fake the completion time while completing a peer-offloaded task. Such servers can also inject the reputation system with false ratings.

 The Algorand system, introduced by Yossi Gilad and team \cite{gilad2017algorand}, relies on a cryptographic selection method and the Byzantine agreement protocol. Encouraging user engagement and participation, which involves being online when selected and covering network costs, could be necessary. Incentives, possibly in the form of rewards, might be added to achieve this. Algorand also confronts the task of managing large data loads when new users retrieve existing blocks and their certificates. While other cryptocurrencies also deal with this issue, Algorand's high throughput may lead to scalability concerns. Notably, committee members' identities are exposed once they send messages, making gradual corruption of users a potential threat.

%%%%%% LoR %%%%%%%%
%%%%%% #34 %%%%%%% \vaezi{*}
$\LoR$ is a secure collaborative, distributed system in which various groups of collaborators are formed, where each focuses on a specific job.
It is a convenient solution to 5G and IoT early challenges such as decentralization, network privacy, and security vulnerabilities~\cite{nguyen2020blockchain}.
It presents itself as a middleware, offering heightened reliability, security, and additional features to pre-existing cooperative systems. 
This system guarantees that when a user asserts the provision of a service, all other collaborators verify this claim, ideally within short timeframes. Furthermore, if a user invests money in collaboration, the system locks the required funds until the service is received and payment can be made to the workers.
Figure~\ref{fig:comparison} presents a summary of the distinctions among the aforementioned systems.
%characteristics among ..

\setlength{\textfloatsep}{5pt}
\setlength{\intextsep}{10pt}
\begin{figure}[tp]
\centering
\includegraphics[width=1\textwidth]{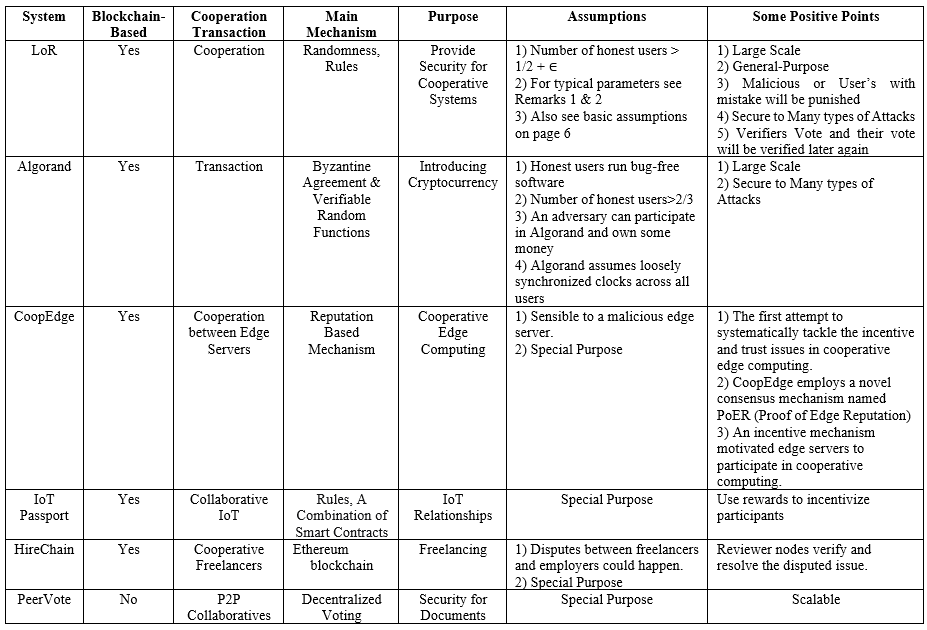}
\caption{A concise comparative overview.}
\label{fig:comparison}
\end{figure}

There are a lot of complex strategies used by the parties in the currently available systems to provide security~\cite{brown2018formal,carlsten2016instability,kiayias2016blockchain}. From another perspective, security within distributed systems introduces a concern due to the potential exposure of contents to other parties, including malicious ones. It involves concealing resources and data from unauthorized individuals, preventing unauthorized data alterations, and controlling access to data and computing services. Therefore, $\LoR$ must guard against attacks by malicious users to maintain its security. The system must be resistant to intervention from liars, ensuring reliable operation. See Appendix~\ref{sec:attacks} for details on the definition of security, reliability, and attacks.

Section~\ref{sec:summary} deals with a summary of the structural design of the $\LoR$ system. Section~\ref{sec:analysis} covers some mathematical analysis. 
Section~\ref{policies} states the policies. Section~\ref{sec:discussion} presents a small discussion.
Appendix~\ref{sec:properties} enumerates some of the properties and characteristics of $\LoR$. Appendix~\ref{sec:system} delves into the details of the components of the system. Appendix~\ref{sec:attacks} deals with the definition of attacks and security and the reasons why $\LoR$ is secure against those attacks.

%%%%%%%%%%%%%%%%%%%%%%%%%%%%%%%%%%%%%%%%%%%%%%%%%%%
%%%%%%%%%%%%%%%%%%%%%%%%%%%%%%%%%%%%%%%%%%%%%%%%%%%
%%%%%%%%%%%%%%%%%%%%%%%%%%%%%%%%%%%%%%%%%%%%%%%%%%%

%%%%%%%%%%%%%%%%%%%%%%%%%%%%%%%%%%%%%%%%%%%%%%%%%%%
%%%%%%%%%%%%%%%%%%%%%%%%%%%%%%%%%%%%%%%%%%%%%%%%%%%
%%%%%%%%%%%%%%%%%%%%%%%%%%%%%%%%%%%%%%%%%%%%%%%%%%%
\section{Design Overview}
\label{sec:summary}
The $\LoR$ system serves as a middleware, operating independently of the technical setup of a given platform %\hl{system} 
designated for a specific application. When users utilize $\LoR$, they undergo iterative check-ups, while their funds are transferred through the $\LoR$ financial system. 
Therefore, the functioning of the $\LoR$ remains unaffected by factors such as communication protocols, whether the communication is asynchronous or synchronous, the collaborative infrastructure among parties, or even the type of network shared among the users of the application.
Hereafter, an implementation of the $\LoR$ system for a specific purpose is referred to as an \emph{instance} of $\LoR$.

The $\LoR$ system comprises the following components (the components are illustrated in Figure~\ref{fig.overal}):
\begin{enumerate}
    \item Administrator (Admin): The admin is responsible for selecting services and determining their prices. It is important to note that users are \emph{not allowed} to introduce new services themselves. Examples of services that administrators can define within their $\LoR$ instances include video streaming, storage provision, peer-reviewing, offering general services, and even hardware rentals such as CPU or GPU resources. Once the system is initiated, the admin's choices regarding services and prices are set and cannot be changed. 
   For more information on how the admin can select these services, please refer to Appendix Subsection \ref{servicerequirements}.
   % \hl{In any instance of $\LoR$ services must be chosen based on specific circumstances. These circumstances are essential for the system's unique traits}
    %% #20 %% , such as secure participant collaboration 
    %% #34 %% (see \ref{servicerequirements} for details of these conditions). 
 
 \item  Traders: Individuals who already join the system are referred to as ``Traders'', as they can collaborate and exchange services. 
 %% #20 %% Traders can take on roles either as investors or workers. 
 When discussing a specific cooperation, traders may be identified as collaborators or coworkers (in what follows those who are not in $\LoR$ but intend to join an instance will be referred to as \emph{users}).

\item 
$\ARA$: For transactions, the $\LoR$ system employs an internal monetary system. It's important to emphasize that the intention behind this monetary system is NOT to introduce a new cryptocurrency. The currency unit in $\LoR$ is referred to as $\ARA$, which facilitates the exchange of services among collaborators within the platform. 

 \item  Coin: A {\it coin} is a structure that represents a particular service and its attributes. Every coin has a clear value based on $\ARA$. 
%% #20 %% , which is the system's currency.

\item Cooperation Ring: This is a collection of coins owned by coworkers assembled randomly within a structure referred to as \emph{cooperation ring}s. 
In $\LoR$, in each cooperation ring, one trader pays for a service, and one or more other traders provide services. In what follows traders who pay are referred to as \emph{investor}s whereas traders who serve services are called \emph{worker}s. 
The size of a cooperation ring is determined by the type of services joined in that cooperation ring, while the type of a service is fixed by the administrator when the instance is initialized. %Note that for some types of services the size of the cooperation ring is not fixed and may be defined by the investors.} 
In $\LoR$, the administrator can decide whether to select types of services having fixed cooperation ring sizes or allow services that have cooperation rings of arbitrary sizes.
%%% # 34 %%%% For instance, let's take the example of a service where an investor needs to select a specific number of workers to execute a particular algorithm and then compare the results. 
Our analysis, as detailed in Section~\ref{sec:analysis}, covers both of these scenarios.
%The $\LoR$ platform manages parallelism across different cooperation rings. 
Note that, coworkers in a cooperation ring are unaware of their collaborators, as a random procedure selects them.
%, each consisting of precisely one investor and one or more workers. Notably, a cooperation ring cannot have more than one investor.
Also, a trader cannot have more than one coin in one specific cooperation ring. Traders are motivated to use one account though (see Rule~\ref{rule:oneaccount}).

 \item Fractal Ring: This refers to a collection of cooperation rings verified by a specific verification team. The size of this collection will be determined through a random procedure.
 %%% #20 %%% Traders create this collection by running a random procedure.
 %%% #8 %%% that need to be submitted to the system.

\item Verification Team: Corresponding to each fractal ring, there is a verification team. A group of randomly chosen traders is assigned the responsibility of meticulously inspecting each cooperation ring within a fractal ring to ensure trustworthy collaboration. This is the third and last random procedure of $\LoR$.
%The $\LoR$ system employs verifier groups to periodically assess the state of cooperation rings.
%A round designates a specific operational period for the system. These evaluation intervals are strategically determined to ensure security. Specifically, if, by a certain round, there is a lack of confirmation from any user regarding a collaboration, the system enforces penalties on the participants. 
 \item Hierarchical Storage: 
A robust and reliable database is essential within the $\LoR$ system to monitor its status and facilitate information sharing. This involves establishing a hierarchical structure to safeguard critical data, including system status, account records, and essential system-related details. To achieve this, the system employs four distinct tables, each with a specialized role: the Coin Table, Cooperation Table, Ring-Control-Block ($\RCB$), and Traders-Control-Block ($\TCB$). Detailed explanations of these tables can be found in Section~\ref{sec:system}.
The $\TCB$ table takes center stage as the primary database, serving as a repository for comprehensive and enduring records encompassing the entire system. 
%%% #31 %%%% It consolidates and aggregates all information from the other tables. 
Blockchain technology can effectively implement this table, ensuring the secure storage of historical and essential data.
%%% #31 %%%% An access and authorization protocol governing the $\TCB$ table is of utmost importance. Exclusive writing privileges are granted solely to verification teams, highlighting their pivotal role in system maintenance. 
To ensure smooth operations, the $\TCB$ table is subject to mutual exclusion, a mechanism that prevents simultaneous write access by multiple verification teams, highlighting their pivotal role in system maintenance (see Subsection~\ref{subsec:TCBRCBCP} for more details).
%%% #31 %%%% The primary storage structure employed in $\LoR$ is a table referred to as $\TCB$.
 \end{enumerate}

The $\LoR$ system utilizes timestamps called a ``checkpoint'' to mark approximate time intervals within the system. These checkpoints establish the expected durations for the initiation and successful conclusion of collaborations.
Within the time span between consecutive checkpoints ($\CP$s), the system divides time into smaller units known as ``rounds''. The count of rounds between checkpoints can be dynamically adjusted according to system rules or specified by an administrator before the system instance begins (Remark~\ref{remark:typicalparameters} mentions the details of specifying round time). 
%%% #31 %%%% Once initiated, administrators cannot modify this setting.
Each checkpoint triggers a process called the ``checkpoint-process'', while each round commences with a ``round-process''.
%%% #31 %%%% These processes are crucial for assessing the status of ongoing cooperation rings, ensuring the system's continuous operation, and upholding its security and reliability.

The $\LoR$ system employs an iterative random process that will be described below (for details see Section~\ref{sec:system}). These random processes and iterations through checkpoints ensure the proper functionality of the system even when the system is operating with unreliable traders, however, adhering to the system's rules and policies between checkpoints is essential for maintaining the reliability and security of $\LoR$ (refer to Section~\ref{policies} for detailed policy information).

%%% # 31 %%%% in order to make sure that the randomized framework operates in a reliable way it is imperative that a substantial number of traders are involved (see Section~\ref{sec:analysis} for a typical set of parameters). 

%%% # 31 %%%% Adhering to the system's rules and policies between checkpoints is essential for maintaining the reliability and security of $\LoR$ (refer to Section~\ref{policies} for detailed policy information).
%%%%%%%%%%%%%%%%%%%%%%%%%%%%%%%%%%%%%%%%%%%%%%%%%%%%%%%%%%%%%%%%%%%%%%%%%%%%%%%%%%%%%%%%%%%%

When an administrator wishes to utilize $\LoR$ for a specific purpose, an instance of the system can be created based on the needs of the application's users.
In $\LoR$, each service is treated like a kind of currency, labeled as a \emph{coin}. For example, a trader who offers CPU services will have CPU-coins.

As mentioned before, $\LoR$ uses its own digital currency. Think of $\ARA$ like money in this system.
%%%% # 32 Later %%%% , where one \hl{unit is represented} as $\mathcal{A}$.
To participate in $\LoR$, a user needs to buy a certain amount of $\ARA$, at least as much as the cheapest coin available. Whether a trader is seeking a service or offering one, she/he needs to pay for her/his request. These requests are then converted into coin structures, and these coins randomly come together to form a cooperation ring. A cooperation ring involves at least two coins: an \emph{investor coin} and a \emph{worker coin} (see Subsection~\ref{subsec:cooperationrings} for the details of cooperation rings).
Traders can create multiple coins and take part in numerous cooperation rings, allowing those who invest more effort or resources to potentially earn more. 

%%%% # 32 Later %%%%\hl{
Traders need to gather a random number of cooperation rings into a set and present it to the system as a fractal ring. Each fractal ring has a corresponding verification team which is assigned through a random selection process among traders.
 These teams are responsible for reviewing and validating cooperation rings along with their collaborators at regular rounds. 
 %Considering the creation of a cooperation ring and a fractal ring, selecting verification rings is the third random procedure of $\LoR$.

A fractal ring must get submitted to the system for its cooperation rings to be able to start working and for the verification team to start their duties at checkpoints. 
%%%% # 32 Later %%%% \hl{ AT CHECKPOINTS}. 
This submission process serves first to validate all the cooperation rings contained within the fractal ring and second to store the information of the fractal ring in the $\TCB$ table.

Upon the successful completion of a verification process, the specific details pertaining to the newly submitted fractal ring are stored. Storing the fractal ring marks the conclusion of a successful submission process (refer to Figure~\ref{fig.overal}). For more detailed information regarding the submission process, refer to Rule~\ref{rule:VTpolicies}.
The submission and the previous steps occur during a checkpoint process, impacting the subsequent functionality of cooperation rings.
% #20 % These cooperation rings become active for collaborative operations at the start of the next checkpoint.
Once a fractal ring is submitted, all cooperation rings within it are designed to commence almost simultaneously and upon the following checkpoint (find the detailed process of constructing fractal rings in Subsection~\ref{subsec:fractalring}).

After passing a checkpoint, all the collaborators are expected to finish their jobs. Collaborators in a cooperation ring participating within a fractal ring can only receive their earnings after they receive confirmation from the verification team and upon the following checkpoint. Managing payments is one of the responsibilities of the verification teams. For detailed rules about payments, see Section~\ref{policies}.

%Freeze a frame of the system. There should be numerous fractal rings and other structures existing at that moment. The task of constructing fractal rings is the responsibility of traders (forming coins, cooperation rings, and fractal rings), while the validation of these structures is overseen by verification teams. We will prove that wrongdoers (unreliable traders) could not abuse the system. Throughout these processes, the utilization of irreversible and non-predictable random methods is a fundamental characteristic.

Let's freeze a moment within the system. At this specific time, various fractal rings and other structures come into play. Construction of fractal rings is a responsibility of traders, involving the creation of coins, cooperation rings, and fractal rings. Verification teams oversee the validation of these structures. Throughout these processes, utilization of irreversible and non-predictable random methods is a fundamental characteristic.

%\hl{At any given point in time, if we were to pause the system, numerous instances of combined fractal rings proposed by traders would be present within $\LoR$. Note that the task of constructing fractal rings is the responsibility of traders (forming coins, cooperation rings, and  fractal rings), while the validation of these structures is overseen by verification teams.
%Throughout these processes, the utilization of irreversible and non-predictable random methods is a fundamental characteristic.}

%%%%%%%%%%%%%%%%%%%%%%%%%%%%%%%%%%%%%%%%%%%%%%%%%%%%%%%%%%%%%%%%%%%%%%%%%%%%%%%%%%%%%%%%%%%%%%%%%

In any instance of the $\LoR$ the following are assumed.  
\paragraph{\bf Basic assumptions:}
%\label{sec:assumptions}
\begin{enumerate}
    \item The value $0.5+\epsilon < 1-\alpha \leq 1$ stands for the percentage of trustworthy  traders ($\epsilon$ is a small positive constant).
    \item All (or the majority of) the traders are able to provide every service defined in the system. A trader can choose to invest in any one of these services too. 
    \item Services need to be quantifiable in terms of both cost and operational aspects relative to time (Subsection~\ref{servicerequirements} in Appendix includes more details).

%%%% #11 %%%%% *) An administrator may choose a tiny percent, say 10\% of the traders, not to be able to provide all services, though. 
   \item Traders cannot define any different type of services to the system or change the prices of the services.
   \item $\TCB$ is distributed among the traders. Every memory of a verification team has a replica of the information required for verifying the corresponding fractal ring. The table $\TCB$ is implemented based on blockchain.
\end{enumerate}

%%% Check Later #6 %%%%%% Note that after an implementation the complexity of the system is transparent to the users (traders or administrators). (Distribution Defined by Tanenbaum) 

\setlength{\textfloatsep}{5pt}
\setlength{\intextsep}{10pt}
\begin{figure}
\centering
\includegraphics[width=0.9\textwidth]{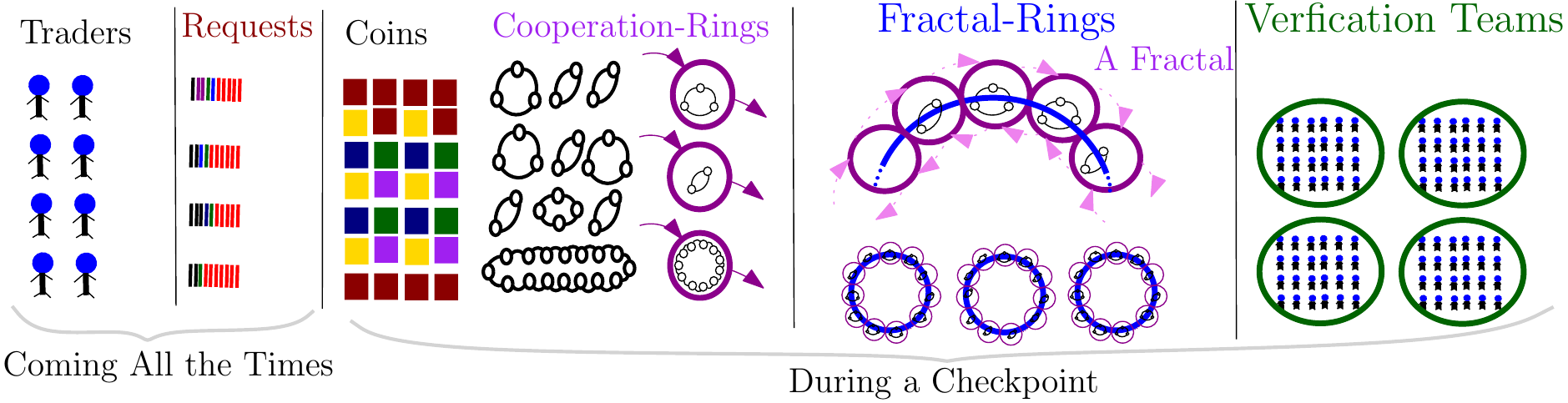}
\caption{A framed moment of the system.}
\label{fig.overal}
\end{figure}

%%%% #33 %%%%% 
%%%% #33 %%%%% 
%%%% #33 %%%%% 
%%%% #33 %%%%% 
 %%%% #33 %%%%% 

%%%%%%%%%%%%%%%%%%%%%%%%%%%%%%%%%%%%%%%
%%%%%%%%%%%%%%%%%%%%%%%%%%%%%%%%%%%%%%%

\paragraph{\bf Characteristics}
Due to Theorem~\ref{thm:Verifsound} the verification teams are highly reliable.
Moreover, the decentralized random structure of the procedures, as well as an intentional lock mechanism used in $\LoR$, makes this platform highly secure against malicious traders.
The lock mechanism locks the coins in the memory of collaborators working in a cooperation ring so that the coworkers cannot modify or access the coins (see Subsection~\ref{subsubsec:lock} in Appendix for more information on the details of the lock mechanism).
Using this mechanism, the rules and the randomized structure of $\LoR$ incentivize honesty.
 (For more on the security of $\LoR$ against different attacks see Appendix~\ref{sec:attacks}, and Property~\ref{prop:security} in Appendix~\ref{subsec:characteristics}.)

Some other featues and characteristics of $\LoR$ include the following:
\begin{itemize}
\item Unlike previous systems like Algorand, which primarily focus on transactions, cooperation in the $\LoR$ system holds broader significance.

\item In $\LoR$, slow users are not disadvantaged.

\item Collusion and inflation are not possible within the system.

\item Traders have a strong incentive to collaborate across different services.

\item Every trader is obligated to actively participate and contribute to earn income.
\end{itemize}

For more detailed information on the characteristics of $\LoR$, please see Subsection~\ref{subsec:characteristics}.

%%%%%%%%%%%%%%%%%%%%%%%%%%%%%%%%%%%%%%%%%%%%%%%%%%%
%%%%%%%%%%%%%%%%%%%%%%%%%%%%%%%%%%%%%%%%%%%%%%%%%%%
\section{ Analysis of Reliability and Communication Complexity}
\label{sec:analysis}
This section delves into the analysis of various aspects of the $\LoR$ system. The highlights of this section are providing a fairly accurate abstract model of the $\LoR$ that paves the way to prove some important properties of this collaborative framework.
throughout this section, we let $\kappa$, which is an odd number, stand for the maximum size of a verification team, and $\ell$ stand for the maximum number of cooperation teams that a trader may join. Also, we assume that $N$ is the whole number of traders in a fixed checkpoint through which we are going to analyze the system. Within this setting $\alpha$ stands for the fraction of wrongdoers. It is instructive to note that by a {\it wrongdoer} we mean a trader that just does not perform his responsibilities as he is supposed to do based on his commitments and the principles of $\LoR$. Clearly, an attacker who intentionally attacks the system is definitely a wrongdoer but the reverse implication does not necessarily hold, meaning that the set of wrongdoers cover a very general subset of traders that may cause any kind of dissatisfaction when $\LoR$ is running.

Our first claim is related to the soundness of the system as far as the verification teams are concerned.

\begin{theorem}
\label{thm:Verifsound}
%Any verification team chosen in the $\LoR$ system via the random procedures is a \emph{reliable} and trustful team for the traders of the system.
The probability that a verification team casts a wrong vote is negligible.
\end{theorem}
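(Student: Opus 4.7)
The plan is to reduce a ``wrong vote'' to the event that a majority of the $\kappa$ verifiers are wrongdoers, and then to bound that majority event by a standard Chernoff/Hoeffding tail inequality, exploiting that the expected fraction of wrongdoers on the team is bounded away from $1/2$ by the constant $\epsilon$.

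First I would pin down the random variable of interest: let $W$ count the wrongdoers seated on a given verification team of size $\kappa$. Since $\kappa$ is odd and the team decides by majority, a wrong vote is exactly the event $\{W > \kappa/2\}$. Next I would argue that under the selection procedure described in Section~\ref{sec:system}, and conditional on the identity of the wrongdoer set (which has size at most $\alpha N$), $W$ is stochastically dominated by the hypergeometric distribution $\mathrm{Hyp}(N,\alpha N,\kappa)$, hence, for large $N$, essentially by $\mathrm{Bin}(\kappa,\alpha)$. In either case $\mathbb{E}[W]\leq \alpha\kappa\leq (\tfrac12-\epsilon)\kappa$. Hoeffding's inequality, which is valid for binomial sampling and for sampling without replacement alike, then yields
\[
\Pr\bigl[W > \tfrac{\kappa}{2}\bigr] \;\leq\; \exp(-2\epsilon^{2}\kappa),
\]
so the probability of a wrong verdict decays exponentially in the security parameter $\kappa$, which is exactly the notion of negligibility used in the paper.

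The main obstacle is justifying the stochastic-dominance step rigorously, and it has two pieces. The first piece is non-manipulability of the randomness: a wrongdoer must not be able to bias her own inclusion into a team she plans to subvert, so the proof must lean on the ``irreversible and non-predictable'' randomness source advertised in Section~\ref{sec:summary} to guarantee that, conditional on any coalition of wrongdoers, seat assignments are exchangeable. The second piece is the capacity constraint that each trader lies in at most $\ell$ verification teams, which introduces a mild negative correlation across the teams formed at the same checkpoint. This is handled by a coupling argument: either one shows that the marginal law of a single team is still a uniform $\kappa$-subset of the eligible population (whose fraction of wrongdoers is at most $\alpha$), or one invokes the fact that negatively-associated samples only tighten Chernoff-type tails. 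Once those two subtleties are addressed, the tail bound above transfers verbatim and establishes the theorem.
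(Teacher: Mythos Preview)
Your reduction to the event ``a majority of the $\kappa$ seats are occupied by wrongdoers'' and your Hoeffding bound $\exp(-2\epsilon^{2}\kappa)$ are correct and yield a clean, explicitly exponential negligibility statement. The paper makes the same reduction but then bounds the binomial tail differently: it invokes the Berry--Esseen theorem to approximate $\mathbb{P}\bigl(\mathrm{Bin}(\kappa,1-\alpha)\le\lfloor\kappa/2\rfloor\bigr)$ by $\Phi\bigl((\alpha-\tfrac12)\sqrt{\kappa/(\alpha(1-\alpha))}\bigr)$ and argues this is negligible when $\alpha$ is bounded below $\tfrac12$. Your route is shorter, gives a quantitative rate directly, and avoids the additive $O(\kappa^{-1/2})$ Berry--Esseen error term; the paper's route, on the other hand, furnishes a sharper asymptotic shape (the Gaussian tail) rather than only an upper envelope.

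Two remarks on your ``main obstacle'' paragraph. First, in the paper's model verification-team members are drawn \emph{with replacement} from the full set of traders (Property~\ref{prop:eaualVTprobability}), so the count $W$ is exactly $\mathrm{Bin}(\kappa,\alpha)$ and no hypergeometric-to-binomial or negative-association coupling is needed. Second, your reference to a cap of ``at most $\ell$ verification teams'' misreads the parameter: $\ell$ bounds the number of \emph{cooperation} rings a trader may join, not verification teams, so the cross-team correlation issue you raise does not arise here. With those simplifications your argument collapses to the one-line Hoeffding bound, which is a perfectly valid (and arguably more elementary) alternative to the paper's Berry--Esseen computation.
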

\begin{proof}
Since verification teams operate by majority vote the probability of casting a wrong vote is equal to 
$$
\mathbb{P}(Bin(\kappa,1-\alpha) \leq \lfloor\kappa/2\rfloor)=\sum_{_{i=0}}^{\lfloor\kappa/2\rfloor} {\kappa \choose i} (1-\alpha)^i \alpha^{\kappa-i},
$$ 
and by Berry-Esseen theorem we have,
$$
\left| \mathbb{P} \left(Bin(\kappa,1-\alpha) \leq \lfloor\kappa/2\rfloor \right) - \Phi\left( \left( \frac{\lfloor\kappa/2\rfloor}{\kappa} - (1-\alpha) \right) \frac{\sqrt{\kappa}}{\sqrt{\alpha(1 - \alpha)}}\right) \right| \leq C_{_{0}} \frac{(\alpha^2 + (1-\alpha)^2)}{\sqrt{\alpha(1 - \alpha)} \sqrt{\kappa}},
$$
in which $\Phi$ is the cumulative distribution function of the standard normal distribution, defined as
$$
\Phi(x) \isdef \frac{1}{\sqrt{2 \pi}}\int_{-\infty}^{x} e^{\frac{-z^2}{2}} \ dz
$$
and $C_{_{0}} \leq 0.41$ \cite{zolotukhin_nagaev_chebotarev_2018}. Now, if $\kappa$ is large enough then the probability of casting a wrong vote may be approximated by 
$$\Phi\left( \left( \alpha - \frac{1}{2} \right) \frac{\sqrt{\kappa}}{\sqrt{\alpha(1 - \alpha)}}\right)$$
showing that if $\alpha \ll \frac{1}{2}$ this probability is negligible.

\end{proof}

\subsection{Communication complexity and memory usage in $\LoR$}
\label{subsec:commmunicationcomplexity}
Although $\LoR$ is a fully decentralized system and its evolution is based on procedures which are light algorithms, one may be curious about the number of messages and the space that each trader communicates/needs while trading. 
%%% # 1 %%%% From this point of view, the computational time-complexity of the $\LoR$ system is actually real-time (see comments on $\LoR$ implementation in Section~\ref{sec:system}).
To concentrate on the worst case, we consider an instance of $\LoR$ in which there are services of any possible size for the cooperation rings, and subject to this hypothesis we provide a theoretical model that will not only help to prove the boundedness of the communication complexity of the system (actually in an extreme worst case regime) but also will be effectively used later to analyze the satisfiability of traders as far as their operations subject to $\LoR$ principles are justified. 

%, we prove in Theorem~\ref{thm:constantdgeree} that the communication complexity of the traders is worst case polynomially bounded by the maximum number of cooperation rings that a trader may share in $\LoR$. 
%\hl{In fact, the communication complexity is constant to the number of cooperation rings traders themselves request to participate in. Note that the size of a verification team remains the same with this increase.  }
\begin{theorem}
\label{thm:constantdgeree}
 %Let $\ell$ denote the maximum number of cooperation rings that a typical trader may share in and also let $\kappa$ be the maximum size of a verification team in $\LoR$. Then 
 The expected number of communication channels for a typical trader in $\LoR$ is asymptotically upper bounded by $\mathcal{O}(\kappa) + \mathcal{O}(\ell^{4})$ in the extreme worst case.
\end{theorem}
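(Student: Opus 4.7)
The plan is to split the expected number of communication channels for a typical trader into two essentially independent contributions and to bound each separately: a verification-layer contribution that scales with the size $\kappa$ of a verification team, and a cooperation-layer contribution that scales polynomially in the maximum number $\ell$ of cooperation rings a trader may join. The independence of the two underlying random procedures (ring formation and verification-team selection) lets me fix a ``typical'' trader and take expectations over both sources of randomness separately.

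For the verification-layer contribution, I would observe that every cooperation ring in which the trader participates lives inside exactly one fractal ring, and each fractal ring is validated by a single verification team of size $\kappa$; these teams are shared across all cooperation rings in the same fractal ring. Grouping the trader's $\ell$ cooperation rings according to their hosting fractal rings, the handshake with each such team contributes at most $\kappa$ distinct channels. A careful accounting, exploiting the fact that at a fixed checkpoint the distinct verification teams that involve a typical trader are $\mathcal{O}(1)$ in expectation (because the random assembly spreads the trader's rings over the global pool of fractal rings rather than concentrating them), absorbs the naive $\ell \kappa$ count into the claimed $\mathcal{O}(\kappa)$ term.

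For the cooperation-layer contribution, I would model the ring structure as a random hypergraph whose vertices are traders, each of hyper-degree at most $\ell$, and whose hyperedges are cooperation rings whose sizes are random and determined by the admin's chosen services; the extreme worst case corresponds to services that permit hyperedges of arbitrary size. The key step is to bound in expectation the size of a hyperedge incident to the fixed trader. I would do this by a double-counting / balls-into-bins argument on the $N\ell$ total coins, coupled with a concentration inequality controlling the right tail of the hyperedge-size distribution. Once the expected ring size is shown to be at most polynomial in $\ell$, multiplying by the trader's $\ell$ rings gives the number of direct peers, and an additional factor of $\ell$ (iterated twice) arises from the protocol's check-up messages that each peer in turn routes through their own $\ell$ cooperation rings during a checkpoint; assembling these factors yields the stated $\mathcal{O}(\ell^{4})$.

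The main obstacle is precisely this hyperedge-size bound: when arbitrary ring sizes are permitted, the worst-case theoretical ring size is unbounded, so the whole estimate hinges on showing that atypically large rings occur with sufficiently low probability to contribute only $o(1)$ to the expectation. I expect the bulk of the proof to be a tail-probability estimate of this form, after which the final assembly of the $\mathcal{O}(\kappa) + \mathcal{O}(\ell^{4})$ bound by combining the two layers is routine.
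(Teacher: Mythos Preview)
Your decomposition misidentifies the source of both terms. In the paper's model the degree of a typical node $v$ splits as $(\kappa-1)+2(\ell-s)+\mu$, and the three pieces have meanings different from the ones you propose. The $\kappa-1$ comes from $v$'s membership in \emph{one} verification team (modeled as a clique of size $\kappa$), not from teams that verify $v$'s own rings. The cooperation-ring contribution is only $2(\ell-s)\leq 2\ell$: cooperation rings are modeled as cycles of random permutations, so each ring adds exactly two neighbors regardless of its size --- there is no hypergraph with hyperedges of unbounded size to tame here, and hence no ``expected hyperedge size'' to bound. The genuinely random and dominant part is $\mu$, the number of edges from $v$ (acting as a verifier) to the traders inside the fractal ring that $v$'s verification team is assigned to validate. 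This is what produces the $\mathcal{O}(\ell^4)$, not any iterated routing of check-up messages; the two extra factors of $\ell$ you invoke do not correspond to any mechanism in the protocol.

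Consequently the technical heart of the argument is entirely absent from your plan. Bounding $\mathbb{E}(\mu)$ requires modeling the cycle structure of $\ell$ independent random permutations via the theory of logarithmic combinatorial structures: cycle counts of length $i$ are asymptotically Poisson with parameter $\lambda_i=\ell/i$, and the selection of cycles into a fractal ring is a further binomial process with success probability $\xi_i$. One then computes $\mathbb{E}(\mu)=2\ell\sum_{i>1}\xi_i\sinh(\ell/i)$ via the Poisson generating function, splits the sum at $i=\ell$, uses $\sinh(x)\leq x+x^3/5$ on the tail, and with a worst-case choice of $\xi_i$ obtains a bound involving $\zeta(2)$ and $\zeta(4)$ that collapses to $\mathcal{O}(\ell^4)$. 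A balls-into-bins argument with a tail bound on ring sizes does not reach this quantity at all, since $\mu$ measures the size of the \emph{verified} fractal ring, not the sizes of the rings $v$ sits in.
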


\begin{proof}

For this, let us first provide a graph theoretic model of the problem. In this regard, consider the dynamic multigraph of traders in which nodes correspond to traders and edges represent communication channels. To be able to go through a fair average-case analysis, we freeze this dynamic multigraph at a framed moment of the system between two checkpoints and note that the degree of a node in this static multigraph, hereafter referred to as  $\GLoR$, is fixed. Keeping in mind that this static multigraph is actually a random multigraph, we intend to provide an upper bound for the largest degree as a measure proportional to the size of data that a typical trader needs to store and the average number of the messages a trader needs to communicate between two checkpoints.

At first, we try to provide an approximation model in which random events are decomposed into almost independent simpler random processes to make a rigorous analysis possible.
%To provide an approximation for the expected degree of a node in $\GLoR$,  let us try to decompose the whole random process into a couple of almost independent random processes whose mixture will reconstruct  $\GLoR$ based on $\LoR$'s description discussed before in Section~\ref{sec:system}. 
Note that the two processes of formation of fractal rings and assigning verification teams naturally may be assumed to be independent. On the other hand, based on reality, it is logical to assume that traders do not discuss trades simultaneously, or even if they do, all different negotiations of a single trader for different cooperation teams are not related to each other, and consequently, are almost independent. This justifies that one may approximately model the random process of cooperation and fractal rings evolution by an amalgam of independent random permutations composed of cycles corresponding to cooperation rings  (see Figure~\ref{fig:momentslices} for a typical configuration of $\GLoR$).

\setlength{\textfloatsep}{5pt}
\setlength{\intextsep}{5pt}
\begin{figure}
     \begin{subfigure}[b]{0.35\textwidth}
         \includegraphics[width=\textwidth]{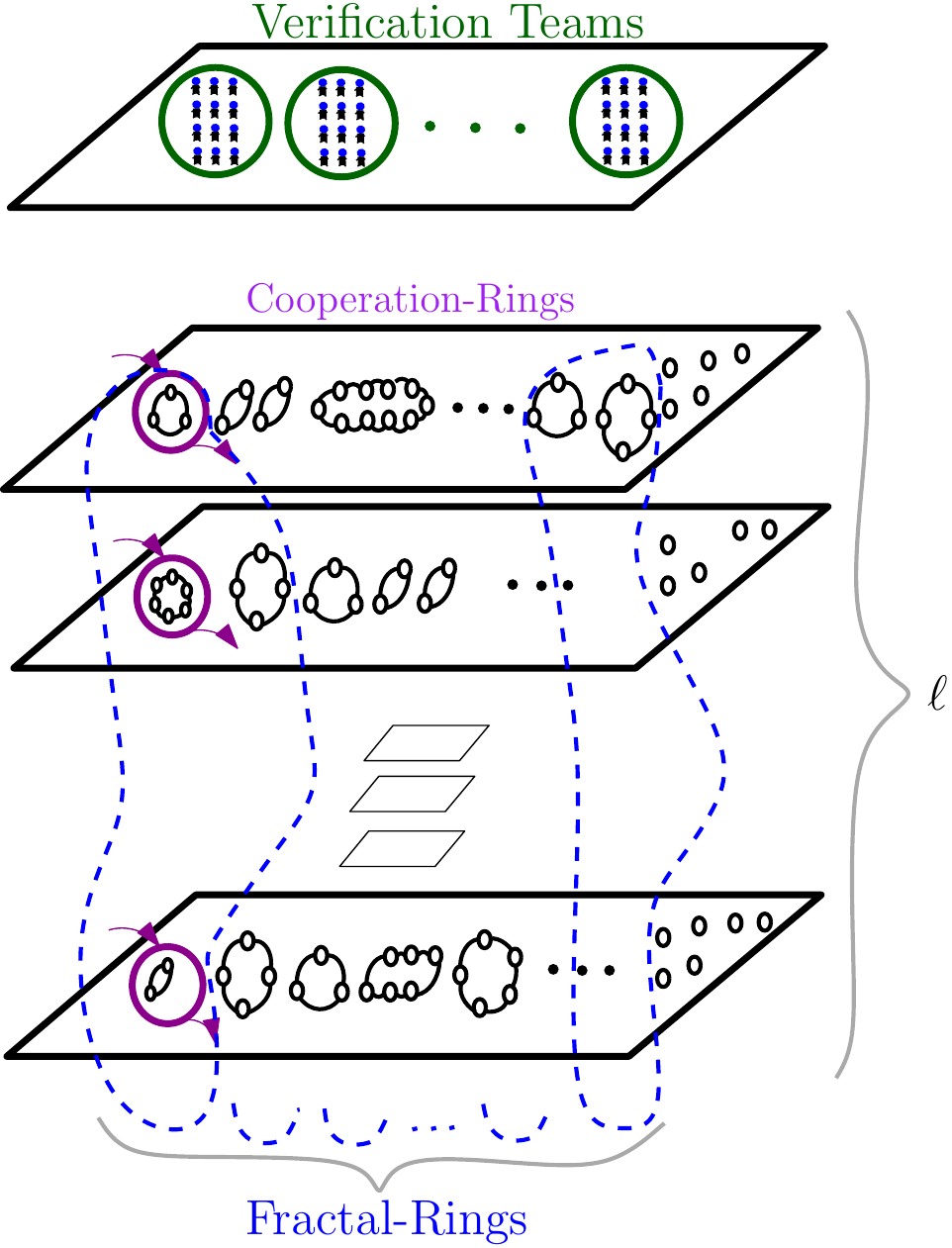}
         \caption{}
        \label{fig:momentslices}
     \end{subfigure}
     \hfill
     \begin{subfigure}[b]{0.4\textwidth}
         \includegraphics[width=\textwidth]{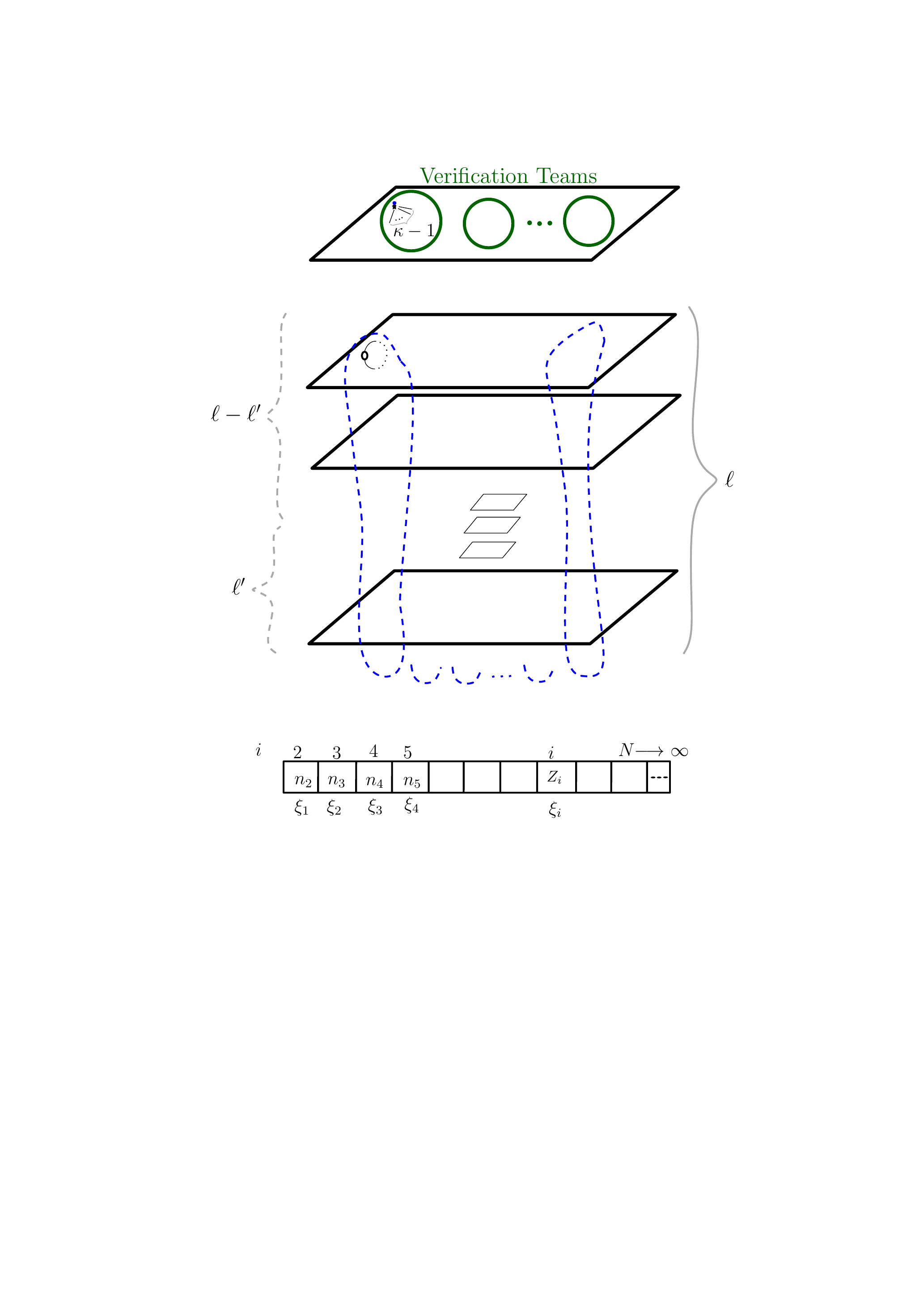}
         \caption{}
         \label{fig:expecteddegree}
     \end{subfigure}
     \hfill
        \caption{(a) A single frame of the system between two checkpoints, split into $\ell+1$ random slices. (b) \small{Let $n_{i}$ be the number of cooperation rings of size $i$ that obeys a Poisson distribution $Z_{i}$ having the mean parameter $\lambda_{i} \isdef \ell i^{-1}$, where a trader obeys a binomial distribution with the success probability $\xi_{i}$ to choose from each $n_{i}$ cooperation rings of size $i$ to form a fractal ring.}}
\end{figure}

% \setlength{\textfloatsep}{5pt}
% \setlength{\intextsep}{10pt}
% \begin{figure}[htp]%figure4
% \begin{center}
% \includegraphics[width=0.5\textwidth]{sheets1.pdf}
% \caption{\small{This is a single frame of the system between two checkpoints, splitted into $L+1$ random slices.}}
% \label{fig:momentslices}
% \end{center}
% \end{figure}

Keeping these assumptions in mind and letting $\ell$ be the maximum number of cooperation rings that a trader may share, one may think of $\GLoR$ as an amalgam of $\ell+1$ slices, first of which containing the verification teams and $\ell$ next slices providing a representation of cooperation rings decomposed into $\ell$ independent random permutations. Within this setup, if a trader cooperates in $\ell' \leq \ell$ cooperation rings then the trader is represented as a fixed point of the rest of the corresponding $\ell-\ell'$ permutations.
Hereafter, we naturally assume that all traders in a verification team are connected to each other and that the size of all verification teams is equal to a constant (maximum) number $\kappa$. The slice containing verification teams may be thought of as an average case approximation of the reality, in which we have assumed that each trader on average cooperates in one verification team and each verification team verifies a unique fractal ring consisting of a block of cooperation rings in lower $\ell$ remaining slices\footnote{Refinements of our simplifying assumptions to more realistic scenarios, e.g. as weighted-multigraph models or uniform partitions of cooperation rings to fractal rings or even more complex verification team assignments, give rise to more involved models whose analysis are far beyond our limitations for what is discussed in this article and may appear elsewhere.}. In this regard note that, our decomposition is also justified by the randomness of the model in correlation with a result of S.~Janson \cite{JansonBook} indicating that the distribution of cycles of a constant length in an amalgam of random permutations still obeys a Poisson distribution.

Considering the setup of our simplified model, note that the degree of a typical node (say $v$), corresponding to a trader, is equal to $(\kappa-1)+2(\ell-s)+\mu$, 
 that consists of three parts. First note that, there are $\kappa-1$ edges within the verification team in which nodes are all connected to each other, and moreover, there are $2\ell'=2(\ell-s)$ edges within the $\ell'$ slices corresponding to cooperation rings, where $s$ is the number of slices in which the node is fixed by the corresponding permutation (meaning that the corresponding trader does not contribute to any cooperation ring in these $\ell'$ slices). On the other hand, there are also $\mu$ edges corresponding to connections between the node $v$ and the fractal ring verified by it. It is instructive to note that, in this setup, the probability of a node being fixed by a random permutation is equal to $1/N$ when $N$ is the total number of nodes, showing that the first component of the degree which is equal to $\kappa-1+2(\ell-s)$ tends to $\kappa+2\ell-1$ when $N$ tends to infinity. Hence, for a sufficiently large number of nodes (i.e. traders), this part may be assumed to be a constant. In the sequel, we are looking for an upper bound on $\mu$. 

% \setlength{\textfloatsep}{0pt}
% \setlength{\intextsep}{0pt}
% \begin{figure}[htp]%figure4
% \begin{center}
% \includegraphics[width=0.5\textwidth]{sheets3.pdf}
% \caption{\small{Let $n_{i}$ be the number of cooperation rings of size $i$ that obeys a Poisson distribution $Z_{i}$ with the mean parameter $\lambda_{i} \isdef \ell i^{-1}$, where a trader obeys a binomial distribution having the success probability $\xi_{i}$ to choose from each $n_{i}$ cooperation rings of size $i$ to form a fractal ring.}}
% \label{fig:expecteddegree}
% \end{center}
% \end{figure}

To estimate the expectation of the random variable $\mu$, forming the random component of the degree, we apply the theory of logarithmic combinatorial structures (e.g. see \cite{TavareBook}). In this context, let $C_{i}(n)$ stand for 
the number of components of size $i$ of a specific combinatorial structure of size $n$.
 Taking uniform samples, one is interested to study the distribution of the $n$-tuple 
$$(C_{1}(n),C_{2}(n),\ldots,C_{n}(n)),$$
when $n$ is sufficiently large. The general idea in such an approach is to show that there exists a distribution $Z_{i}$ such that for the random variable 
$$T_{n} \isdef Z_{1}+2Z_{2}+\cdots+nZ_{n}$$
for which $Z_{i}$'s are i.i.d, one has the conditioning identity
$$(C_{1}(n),C_{2}(n),\ldots,C_{n}(n)) =_{d} ((Z_{1},Z_{2},\cdots,Z_{n})|T_{n}=n).$$
Within this setup, usually one tries to approximate a functional $\eta$ of the structural counts as 
$$\eta((C_{1}(n),C_{2}(n),\ldots,C_{n}(n))$$
by the result of applying the functional to the i.i.d process as
$$\eta((Z_{1},Z_{2},\cdots,Z_{n})).$$
Such an approach is usually quite hard to be analyzed rigorously and is extensively studied for some simple functionals on structural counts (see \cite{TavareBook} for the details).
It is also interesting to note that for the logarithmic combinatorial structures the limit $i \mathbb{E}(Z_{i})$ exists and can be computed in some specific cases.
 
In our case, we are actually facing a combination of such random processes in which the random permutation process falls into the class of random assemblies that may be approximated by i.i.d Poisson distributions, on the one hand, and $\eta$ itself is a random functional of choosing a uniform fractal ring, as a collection of cycles of random permutations, i.e. a random combinatorial process of selection type which may be approximated by binomial distributions.

Hence, first note that for the case of permutations we know that the distribution of cycles of size $i$ is a Poisson distribution with parameter $i^{-1}$, and consequently, for $\ell$ such independent random permutations, the distribution of cycles of size $i$ will be a Poisson distribution with the mean parameter $\lambda_{i} \isdef \ell i^{-1}$. Therefore, our setup reduces to choosing cycles of size $i$ from  $N$  Poisson distributions with parameter $\lambda_{i}$ and then picking a pack of cycles, as a fractal ring, uniformly at random from such a collection of cycles with $(C_{1}(n),C_{2}(n),\ldots,C_{n}(n)$ denoting the cycle counts for such a selection that may be approximated by binomial distributions (see \cite{TavareBook}). Clearly, the random variable of interest is
$$\mu \isdef C_{1}(n)+2C_{2}(n)+\ldots+nC_{n}(n).$$
In what follows we assume an asymptotic regime for sufficiently large $N$ as the number of traders in which case it is known that the distribution of cycle counts of a random permutation is asymptotically equivalent to that of the random variable 
$(Z_{1},Z_{2},\cdots),$
where in our case $Z_{i}$'s are all independent Poisson random variables with the mean parameter $\lambda_{i}$. Hence, assuming that the choice of cycles of length $i$ 
obeys a binomial random variable of probability $\xi_{i}$, one may compute the expectation of $\mu$ using the probability generating function of the Poisson distribution as follows,
$$\mathbb{E}(\mu)=\displaystyle{\sum_{i > 1}} i\ \xi_i \left(\displaystyle{\sum_{k > 1}} k\ \mathbb{P}(Z_i=k)\right).$$

In summary, at a fixed moment of the system, to construct a fractal ring, a trader chooses from cooperation rings of different lengths, say $i > 1$, that are generated by a Poisson random variable $Z_{i}$ having the parameter $\lambda_{i} \isdef \ell i^{-1}$, obeying a binomial distribution with the success probability $\xi_i$ (see Figure~\ref{fig:expecteddegree}). 

Now, to compute $\mathbb{E}(\mu)$, first, note that one may use the probability generating function of the Poisson distribution 
$$\displaystyle{ \sum_{k \geq 0}^{}P[z_i = k]x^k} = e^{-\lambda_i(1-x)}$$
to compute  $\displaystyle{\sum_{k > 1}} k\ \mathbb{P}(Z_i=k)$ as follows,
$$\frac{\partial{}}{\partial{x}}(\displaystyle{\sum_{k \geq 0}} e^{-\lambda_i} \cdot \frac{\lambda_i}{k!} x^k)\bigg|_{x=1} = \displaystyle{\sum_{k \geq 1} k (e^{-\lambda_i} \frac{\lambda^{k}_{i}}{k!})} = \lambda_i e^{-\lambda_i} + \displaystyle{\sum_{k > 1}} k (e^{\lambda_i} \cdot \frac{\lambda^{k}_{i}}{k!})
= \frac{\partial{}}{\partial{x}}(e^{-\lambda_i(1-x)})\bigg|_{x=1} = \lambda_i e^{\lambda_i},$$
implying that
$$ \displaystyle{\sum_{k > 1}} k (e^{-\lambda_i} \cdot \frac{\lambda^{k}_{i}}{k!}) = \lambda_i e^{\lambda_{i}} - \lambda_i e^{-\lambda_i} = \lambda_i(e^{\lambda_i} - e^{-\lambda_i}) = 2\lambda_i \sinh{(\lambda_i)}.$$
Consequently, 
\begin{equation}
\label{eq:main}
\mathbb{E}(\mu)=\displaystyle{\sum_{i > 1}} i\ \xi_i \left(\displaystyle{\sum_{k > 1}} k\ \mathbb{P}(Z_i=k)\right)=
%\displaystyle{\sum_{i > 1}} i\ \xi_i \left(\frac{\partial}{\partial x} (e^{-\lambda_i(1-x)}) |_{_{x=1}}-\lambda_i \ e^{-\lambda_i}\right)=
 2\ell \displaystyle{\sum_{i > 1}} \xi_i \sinh(\ell/i).
\end{equation}

By noting that in reality, to form a fractal ring, the probability of choosing a ring of length $i$, i.e. $\xi_i$, fast decreases with respect to $i$,  one may assume that $\xi_i$ tends to zero much faster than $\lambda_i$ and by the inequality $\sinh(x) \leq x+x^3/5$ for $0 < x <1$, one may deduce that 
$$\mathbb{E}(\mu) =  2\ell \displaystyle{\sum_{i > 1}} \xi_i \sinh(\ell/i) \leq 2 \ell \left(\displaystyle{\sum_{1 < i \leq \ell}} \xi_i \sinh(\ell/i)+\displaystyle{\sum_{i > \ell}} \xi_i \sinh(\ell/i) \right)$$
$$ \leq 2 \ell \left(\displaystyle{\sum_{1 < i \leq \ell}} \xi_i \sinh(\ell/i)  +  
\displaystyle{\sum_{i > \ell}} \xi_i \frac{\ell}{i}
+ \displaystyle{\sum_{i > \ell}} \xi_i \frac{\ell^3}{5i^3} 
\right).$$
Now, letting $\xi_i$ be proportional to $(\sinh(\ell/i))^{-1}$ for $i \leq \ell$ and defining $\xi_i$ to be proportional to $i^{-1}$ for $i > \ell$ and recalling the definition of the Riemann zeta function as
$$\zeta(k) \isdef \displaystyle{\sum_{i \geq 1}} \frac{1}{i^k},$$
we have
$$\mathbb{E}(\mu) \leq 2 \ell (O(1)+O(1)\zeta(2)\ell+O(1)\zeta(4)\ell^3) \leq O(\ell^4),$$
showing that the worst case maximum expected degree in $\GLoR$ is upper bounded by $O(\kappa)+O(\ell^4)$.
\end{proof}

%%%%%%%%%%%%%%%%%%%%%%%%%%%%%%%%%%%%%%%%%%%%%%%%%%%%%
%%%%%%%%%%%%%%%%%%%%%%%%%%%%%%%%%%%%%%%%%%%%%%%%%%%%%
\subsection{Traders' satisfaction and reliability of $\LoR$}
\label{subsec:badusers}
 Note that in $\LoR$ there's no requirement to differentiate between a trader intentionally causing harm or a user whose work doesn't meet expectations unintentionally. The system functions effectively due to the absence of recurring mistakes among trustworthy individuals. If someone's actions fall short of expectations, they face temporary consequences. Consistently, unreliable individuals experience progressive penalties, while others gain advantages from the system. Users strive to excel in their work and increase their earnings. 
 As a result, a reliable trader is someone whose actions mostly meet the satisfaction of their collaborators. In what follows we prove that this is actually the case when the prenalty imposed of cooperation rings at each round is set to be proportional to the square of the inverse of the size of the cooperation ring or if there is an upperbound on the size of the cooperation rings whose square is sufficiently smaller than the whole number of traders. In what follows, $R$ is the number of rounds between two checkpoints, $\pen$ is the forfeit or penalty factor set to be imposed on the whole cooperation ring in case of dissatisfaction majority vote of the corresponding verification team (applied to the value of the trader's coin), where we always assume that the fraction of wrongdoers to the total number of traders in the whole instance of the system being analyzed is fixed and equal to $0 \leq \alpha < 1$ (note that this is essentialy a worst case compared to the case when the rate of wrongdoers decrease by an increase in the number of whole traders). 

Consider an individual who intends to engage in wrongdoing or provide false information about a service to exploit cooperation rings. This wrongdoer must participate in a cooperation ring or involve himself in a service. In either scenario, based on the specific service and cooperation, the wrongdoer needs to acquire a coin, which is secured by the lock strategy of $\LoR$. If a situation arises where at least one collaborator is not satisfied, and a trader declares dissatisfaction, the verification team halts the cooperation, and as a consequence, all collaborators face penalties, leading to a reduction in their coin-based payment at the next checkpoint related to that activity.

%To counteract this, dependable traders can be paired with other traders through a fair distribution. This implies that individuals consistently leaving collaborators unsatisfied will experience financial losses over time. In contrast, trustworthy traders have a favorable chance of being matched with other traders.

To facilitate our analysis, let's recall some key assumptions. There are two types of traders, namely, investors and workers, both of whom make payments in exchange for services. An investor, upon making an investment, anticipates receiving a specific service, while a worker expects to be compensated for her/his efforts. In this context, we use the term "satisfaction" instead of "profit" since satisfaction represents the degree to which a trader's expectations are met regardless of trader's intentions. 
%More precisely, a trader is deemed satisfied when they receive the service or payment they anticipated. If a trader receives only half of their expected payment or service, their satisfaction level is considered to be at half. %% # 30 %%% Thus, we can quantify a trader's satisfaction as a numerical value.
%At the commencement of each checkpoint, every trader, regardless of their trustworthiness (whether classified as "Good" or "Bad"), makes purchases of a subset of coins. These coins are randomly selected to be allocated into cooperation rings, which are subsequently organized into fractal rings. A third random function within the $\LoR$ system involves the selection of a verification team corresponding to each fractal ring. %This selection process draws from the pool of all existing traders within the system.
%Let's denote the total number of cooperation rings in the entire system as $\beta$. Now, consider the $i^{th}$ cooperation ring, which comprises $\beta_{i}$ members.
It's already assumed that within each checkpoint of the system, a fraction of $1-\alpha$ percent of the traders are reliable.
%An ordinary trader may face offenders or wrongdoing traders. Let's call such a trader a \textit{reliable} trader. It is a fundamental principle of the system that those who consistently engage in dishonesty or deception should not find satisfaction within it.

Now, consider the structure of each checkpoint, which consists of a total of $R$ rounds. In the event of an unsatisfactory round experienced by any of the collaborators within a cooperation ring, the system imposes a penalty on all coworkers in that ring. Each coworker owns one coin in the corresponding cooperation ring and the penalty $\pen$ is set to be  proportional to $\frac{1}{R}$. 

%If there be a restful verification team verifying the cooperation ring, then the cooperation ring can only have one unsatisfactory round, which could occur at any point from the initial round to the final round of a given checkpoint.

In the sequel we call a cooperation ring, $CR$, to be 
a \emph{bad} corporation rng if it contains at least one \emph{bad} trader, i.e. a wrongdoer.
Also, let us call a verification team, $\VT$, to be a \emph{bad} verification team if at least half of its members are bad traders. 
We have the following four scenarios concerning satisfaction of traders:
%Let's consider a cooperation ring denoted as $CR$, which is interconnected within a fractal ring denoted as $\FR$ and is associated with a corresponding verification team known as $\VT$. All the scenarios are as follows:
\begin{enumerate}
\item $CR$ is good, $\VT$ is good too.
    
    In this case when everyone does her/his duties as expected and there is no dissatisfaction.
    
\item $CR$ is good, $\VT$ is bad.

Certainly, when collaborators are ready to commence their work, they anticipate the lock strategy to secure the coins in their memory. Failure to activate this lock strategy would raise suspicion among coworkers, potentially deterring them from initiating work as they would sense that something is amiss. Once it becomes evident that they have the expected coins in their memory, the collaborators initiate their tasks, and the investor possesses the appropriate coins, resulting in satisfaction for all parties involved.

Since both the workers and the investor are assumed to be reliable, satisfaction prevails throughout all rounds. Consequently, it becomes inconsequential whether the verification team requests confirmation from all the collaborators within the cooperation ring ($CR$). At the conclusion of the checkpoint, $\VT$ is responsible for transitioning the status of the "blocked" coins to either "paid" or "expired" and disbursing payment to the traders. Should $\VT$ fail to fulfill this duty, other verification teams in subsequent checkpoints will undertake this responsibility, as outlined in Rule~\ref{rule:VTConstraints}. 

Note that in this case and in an extremely worst case scenario, a bad verification team may disrupt the cooperation ring, resulting in dissatisfaction among the coworkers, potentially causing a reduction of at most $\pen$ of the coins within their accounts.

\item $CR$ is bad, $\VT$ is good.

When the verification team functions reliably, any sign of dissatisfaction expressed by even a single collaborator within a cooperation ring will promptly expose their untrustworthiness. 
In this scenario, again, for the worst case scenario, a bad verification team may disrupt the cooperation ring, resulting in dissatisfaction among the coworkers, potentially causing a reduction of at most $\pen$ fraction of coins within the trader's accounts.

\item $CR$ is bad, $\VT$ is bad too.

This scenario in the worst case, may result in $R$ rounds of dissatisfaction. This occurs because reliable traders expect the verification team to detect bad collaboration in the first round. However, as the verification process continues through subsequent rounds, it may fail to respond appropriately. In the worst case scenario, this can result in the complete loss of the trader's coins within the cooperation ring ($CR$).
\end{enumerate}

%Please note that in the cases where $\VT$ represents a bad verification team (i.e., Cases 2 and 4), it is important to consider the possibility of a trader $t$ being a member of multiple cooperation rings, let's say, $\gamma$ cooperation rings, all of which are interconnected within $\FR$. Therefore, when assessing trader $t$'s level of dissatisfaction, we should take into account the each cooperation ring of $\FR$ that $t$ plays a role in.
%In this context, the dissatisfaction of trader $t$ can be determined by multiplying the value of $\gamma$ by the individual dissatisfaction values associated with each unique scenario within the cooperation rings.
%%% %%% %% 

%% #30v%%  It's worth noting that a trader can achieve satisfaction in the system as long as they successfully complete at least one checkpoint, even if they encounter up to $R-1$ unsuccessful rounds within that checkpoint.

\begin{theorem}
    \label{thm:usersbenefit}
    For the worst case regime when existence of very large cooperation rings are assumed, if the penalty $\pen$ is proportional to the square of the inverse of the size of a cooperation ring, thenc
%    \begin{itemize}
%       \item [{\rm a)}] 
    %    \item [{\rm b)}] The total number of traders is sufficiently larger that the square of the maximum size of a cooperation ring.
    % \end{itemize}
\end{theorem}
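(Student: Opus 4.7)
The plan is to upper bound the expected fraction of a reliable trader's total coin value that is lost during a single checkpoint interval, by decomposing the loss event according to the four scenarios enumerated in the preamble above and then summing over all possible cooperation ring sizes. The $\pen\propto 1/i^2$ scaling will make the resulting series absolutely convergent even in the extreme worst case where arbitrarily large rings are allowed.

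First I would fix a reliable trader $v$ who owns one coin in a cooperation ring of size $i$, and compute the probability of each scenario. The independence of the random ring formation from the random verification team assignment (the same approximation justified in the proof of Theorem~\ref{thm:constantdgeree}) lets the joint probabilities factor. The probability that the cooperation ring contains at least one wrongdoer is
$$1-(1-\alpha)^{i}\;\leq\;\min(i\alpha,1),$$
while the probability that the verification team of size $\kappa$ assigned to the enclosing fractal ring has a bad majority is the quantity estimated in Theorem~\ref{thm:Verifsound}; call this $\epsilon_\kappa$, which is negligible under the assumption $\alpha\ll 1/2$.

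Next I would insert the per-scenario worst-case loss bounds recalled above (zero in the good/good scenario, at most $\pen$ in each of the two mixed scenarios, and at most $R\pen$ in the bad/bad scenario), obtaining that the expected loss of $v$ from a ring of size $i$ is at most
$$\bigl(1-(1-\alpha)^{i}\bigr)(1-\epsilon_\kappa)\pen \;+\; (1-\alpha)^{i}\epsilon_\kappa \pen \;+\; \bigl(1-(1-\alpha)^{i}\bigr)\epsilon_\kappa R\pen.$$
Substituting $\pen=c/i^{2}$, the first term dominates and contributes $O\bigl((1-(1-\alpha)^{i})/i^{2}\bigr)$; the other two carry the small multiplier $\epsilon_\kappa$ (even $R\epsilon_\kappa$, which remains negligible since $\epsilon_\kappa$ decays in $\kappa$ while $R$ is fixed by the checkpoint schedule). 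Summing over the at most $\ell$ cooperation rings in which $v$ participates, and crudely bounding $1-(1-\alpha)^{i}\leq 1$, the total expected loss is controlled by a constant multiple of $\sum_{i\geq 2} i^{-2}\leq \zeta(2)$, hence bounded uniformly in the maximal ring size.

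The main obstacle is the tension between the growth of $1-(1-\alpha)^{i}$ as $i$ becomes large and the decay of $\pen$: any scaling weaker than $1/i^{2}$ (for example $\pen\propto 1/i$) would, in the regime where very large cooperation rings are allowed, yield a tail that need not sum to a constant, so the quadratic decay is essentially sharp for this argument. A secondary technical point is that a single trader's participation in several cooperation rings is not literally independent; I would handle the expected-loss estimate by linearity of expectation (which requires no independence), and for a sharper concentration statement appeal to the Poisson-assembly approximation used in the proof of Theorem~\ref{thm:constantdgeree} to argue that the inter-ring dependencies are asymptotically negligible.
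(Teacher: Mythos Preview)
Your decomposition into the four scenarios and the appeal to linearity of expectation are fine, but the argument as written establishes only that the expected loss is bounded by a constant, whereas the paper proves the stronger statement that the expected penalty fraction tends to $0$ as the number of traders $N\to\infty$. The ingredient you are missing is the random-permutation cycle structure from the model of Theorem~\ref{thm:constantdgeree}: for a fixed trader, the probability of landing in a bad cooperation ring of a given length $k$ carries an explicit $1/N$ factor. Using the two-colored Poisson cycle counts $\lambda_{i,j}=\frac{1}{k}\binom{k}{i}(1-\alpha)^{i}\alpha^{j}$ with $i+j=k$, the paper obtains
\[
\mathbb{P}[\,v\text{ is in a bad ring of size }k\,]=\frac{1-(1-\alpha)^{k}}{N}\leq\frac{\alpha k}{N},
\]
so that the expected penalty is $(\alpha/N)\sum_{k\leq N} k\,\pen$, which with $\pen\propto k^{-2}$ is $O(\alpha\log N/N)\to 0$. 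Your conditional bound $(1-(1-\alpha)^{i})\cdot c/i^{2}\leq c/i^{2}$ never picks up this $1/N$ and therefore cannot yield a vanishing quantity; at best it gives a uniform constant, which is a weaker conclusion than the one intended.

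There is also a slip in your final summation. Summing over the at most $\ell$ rings in which $v$ participates gives $\sum_{j=1}^{\ell} c/i_{j}^{2}\leq \ell c/4$, not $\sum_{i\geq 2} i^{-2}$; the $\zeta(2)$ series would appear only after averaging over the distribution of ring sizes, and that averaging is precisely the step the paper supplies through the permutation model and which your outline does not carry out.
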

%%%%%%%%%%%%%%%%%%%%%%%%%%%%%%%%%%%%%%%%%%%%%%%%%%%%%
\begin{proof}
For the analysis, let us recall some facts as follows:

\begin{itemize}
    \item Since the probability of having a bad verification team is actually exponentially small and negligible (see Theorem~\ref{thm:Verifsound}), the worst case analysis reduces to computing an upper bound for dissatisfaction of a typical trader when he/she is in a bad cooperation ring verified by a good verification team.
    \item Based on our model in the proof of Theorem~\ref{thm:constantdgeree}, one may think of cooperation rings appearing independently at random in $\ell$ slices where $\ell$ is a constant upper bound, again justifying that the worst case analysis reduces to a guarantee of negligible probability of dissatidfiability in just one of the slices.
    \item It is well-known (see \cite{russia2,zolotukhin_nagaev_chebotarev_2018}) that for a random two-colored permutation on $N$ objects, cycles having $i$ white and $j$ black colors of constant length $i+j=k$, for an asymptotic regime when the limit of the fraction of while elements tend to a constant $1-\alpha$,  admits a Poisson distribution with parameter
      $$\lambda_{i,j}=\frac{1}{k}\binom{k}{i}(1-\alpha)^{i} \ \alpha^{j},$$
    and moreover, these events appear indipently for different pairs $(i,j)$.
\end{itemize}

Hence, fixing an integer $k > 0$ and a typical trader in an asymptotic regime, one may compute the probability of this trader appearing in a bad cooperation ring of length $k$ as

$$\displaystyle{\sum_{i+j=k, j \not =0}} \ \ \displaystyle{\sum_{k > 0}} \displaystyle{\sum_{t > 0}} \mathbb{P}[\#CR=t] \frac{kt}{N}= \displaystyle{\sum_{i+j=k, j \not =0}} \ \ \displaystyle{\sum_{k > 0}} \frac{k\lambda_{i,j}}{N}
=\frac{1-(1-\alpha)^k}{N} \leq \frac{\alpha \ k}{N}.$$
Consequently, a worst case upper bound for the expectation of the penalty fraction of this typical trader is equal to
$$\frac{\alpha}{N} \displaystyle{\sum_{k > 0}} k \ \pen.$$
Consequently,  if $k\pen = O(k^{-1})$ then
%Also, clearly, for the expectation of the number of traders we have 
%$$\displaystyle{\sum^{L}_{k > 0}} \displaystyle{\sum_{i+j=k}} \lambda_{i,j} = \displaystyle{\sum^{L}_{k > 0}} 1 = L \leq N.$$
$$\lim_{N \longrightarrow \infty} \frac{\alpha}{N} \displaystyle{\sum^{N}_{k > 0}} k \ \pen \simeq \lim_{N \longrightarrow \infty} \alpha O(\frac{\log N}{N})  = 0.$$
%It is instructive to mention that the condition  $k\pen = O(k^{-1})$ is extremly far from practical cases since if there is only one cooperation ring of size $N$ then there will be no other cooperation ring of any other size at all, showing that even $\pen = O(k^{-1})$ may be practically applicable in real instances of $\LoR$. 

 \end{proof}
 % \\

 \begin{remark}
   \textbf{Comments on consequences of analysis in real implementations}.  
 \end{remark}

%\\

It must be noted that the assumptions of our theoretical analysis (and the proofs) are essentially tuned to be sufficient for a concrete result and are too strong compared to what happens in real applications. First, note that for a guarantee in Theorem~\ref{thm:Verifsound} it is sufficient that $\alpha=1-\epsilon$ be constantly bounded away from $0.5$ by some positive constant $\epsilon$ where the theorem provides a normal approximation. 

Also, it is instructive to mention that our worst case analysis in Theorem~\ref{thm:constantdgeree} is far from being sharp where in real applications one expects that the communication complexity (i.e. the maximum degree in the graph of $\LoR$) be controlled by an exponent much smaller than $\ell^4$. 
This shows that even in a worst case scenario, the communication and storage complexity for each trader is directly related to the number of cooperation rings they participate in. Therefore, traders do not require extensive storage capacity.

On the other hand, one may note that in real life, the size of cooperation rings is essentially constantly bounded (say by a constant $L$). It is interesting to note that, no rigorously, in this case based on results of \cite{benaychgeorges2009cycles,numofpermutation} one knows that the expected number of cooperation rings of size $k$ is approximately equal to $\frac{N^{k/L}}{k}$, implying that the probability that a trader falls into a bad cooperation ring of size $k$ is approximately equal to $f(\alpha) N^{k/L-1}$ for some function $f$.
Hence, the expectation of the penalty function $\pen$ in this case is bounded as follows,
$$\frac{f(\alpha)}{N} \displaystyle{\sum_{k > 0}}  N^{k/L} \ \pen \leq  \frac{f(\alpha) \pen}{N^{1-\frac{1}{L}}-1},$$
showing that one may choose the penalty function to be independent of the size of the cooperation rings (but definitely depending on the constant parameters of the system as the number of rounds) that provides more flexibility one the administrator side to set up an instance of $\LoR$.
Needless to say, from a theoretical point of view and as a byproduct, our contribution sets up a strong motivation for a rigorous analysis of {\it two-colored $A$-permutations} which seems to be open and remains to be considered in forthcoming research in random combinatorial structures.

\begin{remark}
\label{remark:typicalparameters}
    \textbf{On typical parameters of $\LoR$.}
\end{remark}
To sum up our analysis, let us comment on some typical parameters for $\LoR$. First, it ought to be mentioned that $\LoR$ is intentionally designed as a large-scale system to support a large number of traders in each one of its instances. Hence, let us assume that $N > 10^{5}$. Then for the security justification one may choose $250 \leq \kappa \leq 500$ assuming that $\alpha < 0.45$ guarantees a good bound for the security by Theorem~\ref{thm:Verifsound}. 

Also, for satisfiability one may choose the penalty function to be proportional to $1/R$ where $R$ is the number of rounds and make sure that it is proportional to the inverse of the size of the cooperation rings if there are some services that may admit quite large cooperation rings.

% \paragraph{Round Time}:
% \label{rule:roundtime}
To determine the number of rounds, $R$, let $\gamma$ be the average submission rate of the cooperation rings in a fractal ring in the system. Also, let $C_{f}$ be the expected number of cooperation rings in a fractal ring, implying that the time required for a fractal ring to end is at most $\gamma C_{f}$. 
%A system administrator can change this number ($C_{f}$) according to the system's conditions. 
%Furthermore, we set a dynamic function to compute $\lambda$.
If the time required for a cooperation ring $CR$ is denoted by $\psi$, then the administrator may define the number of rounds, $R$, 
based on the duration of one round that can be determined as follows,
$$E[\psi]= A*R \qquad R=\frac{E[\psi]}{A} \qquad A= K*\gamma.$$
Here, the value $K$ is a regulator, and depending on the requirements of an instance of $\LoR$, one can set $K$ to provide a different round time. 

%\hl{This set up also provides a feasible communicatio complexity since $\ell$ is usually small (say less than $20$).}
%\hl{If we opt for , and we set different values for $(\frac{1}{2}-\alpha)$, namely 0.01, 0.05, and 0.1, as deduced from Theorem~\ref{thm:Verifsound}, we obtain corresponding values of $\Phi$, which are $3 \times 10^{-1}$, $12 \times 10^{-3}$, and $10^{-6}$, respectively.
%Alternatively, if we select $\kappa= 200$ and use the same range for $(\frac{1}{2}-\alpha)$, we observe $\Phi$ values of 0.38, $8 \times 10^{-2}$, and $2 \times 10^{-3}$, respectively.
%???Hence, it appears that setting $\kappa= 500$ and $\alpha= 0.4$ is a suitable choice.}
\section{Payment Rules, Policies of $\LoR$}
\label{policies}
$\LoR$ rules and policies are listed below:

 \begin{enumerate}
  \item
    \emph{Terminating suspicious cooperation rings}: % #1
    \label{rule1}
    
By the event of reaching the end of each round, each verification team checks if every cooperation ring in its fractal ring is verified by all the traders, based on a majority vote procedure.
Consider a specific round and a specific fractal ring. If at least one trader is not satisfied with the collaboration in a cooperation ring, that collaboration is not valid anymore, and such a cooperation ring will be terminated. 
Other cooperation rings are valid and their traders do not need to create another fractal ring, as the fractal ring was submitted successfully. 

%%%% #100 %%%% However, we supposed that more than $1-\alpha$ percent, say $\frac{1}{2}+\varepsilon$, of the traders and their collaborations are reliable. So, we can conclude that a reasonable amount of the cooperation rings in a fractal ring are trustworthy. 
%%% #2 %%% See Claim~\ref{obs:VTproof} for more information on the reliability of the verification teams.

\item 
\emph{Information of the cooperation rings}:
\label{rule:info}

A verification team should receive the required information (shared information on the checkpoints in the $\TCB$ and the $\RCB$ tables) of the cooperation rings from the traders of the corresponding fractal ring. To confirm the information, a random member of a previous verification team that was verifying the trader's older cooperation rings would be selected to pass the required information.

\item
 \emph{Incomplete cooperation}: 
\label{rule.wrong.round.policy}

Suppose cooperation $x$ in a fractal ring needs $R$ rounds to be accomplished. By the end of a round $i$, a trader in $x$ denoted by $t$ declines that the last round was successful. For the round $i$, the trader $t$ refuses to accept a successful transaction; the system decreases a percent of the payments of all the traders involved in that cooperation ring. This percent is specified in Theorem~\ref{thm:usersbenefit}. In fact, this is a punishment for an unfinished job. Note that $\LoR$ penalizes all the collaborators in an unsuccessful cooperation ring since there is no mechanism to distinguish the person who was trouble-making or to verify reasons that may disturb the communications between the traders. 
%%% #100 %%% So, the system punishes every trader of this cooperation ring for ensuring such unsuccessful collaborations will not get plentiful in the system. With the passage of time, trouble-making users will be distinguished from normal collaborators.  
%%% #100 %%% Note that $\delta$ should be a small number; for example, it may equal half of the payment of a successful round.

%#3%\\item
%#3%\\emph{Inflation}: 
%#3%\\label{rule:inflation}

%#2 % $\LoR$ does not pay all the settlements expected by the trader for their successful rounds. This way, $\LoR$ intends to keep the primary currency ($\ARA$) valuable in time.
%#2 % Considering a fractal ring, if we know that equal or less than $R$ rounds were successful, the verification team will pay only a percentage ($p$) of each trader's fee; for example, ninety percent. The admin could adjust the value of $p$ at the system's initiation step or by a fixed procedure that uses some computations. These computations are based on the application and should ensure that by the passage of time in the system, the inflation of the primary currency gets around zero.

\item 
\emph{Payments at the end of checkpoints}:
\label{rule:payments}

 By the end of each checkpoint, the investors receive expired coins (see coin's status~\ref{cointable} for more detail). The expired coins reveal the services an investor received in the previous checkpoint. The system pays the traders who are investors based on their expired coins.
 The workers receive investment coins that reveal the payments of the workers at the end of checkpoints.
Note that verification teams carry the information about the investors and their expired coins, therefore no one can bring fake expired coins and lie about them.
Also, note that if the number of investors decreases gradually, the traders have to invest more to create more cooperation rings and help themselves to speed up their collaborations to get started sooner. Otherwise, there will be plenty of worker coins with a few investor coins. This fact keeps the system in a balanced configuration.

%#3%\\item 
%#2% \emph{Traders are Motivated to Work with Only one Account}: % #6
%#2% \label{rule:oneaccount}
 
%#2% A trader who owns more blocked coins receives more extra money ($\ARA$). This extra money should be considerable to make the traders be incentive to work with only one account.

\item 
 \emph{Initiation}: 
 \label{rule:initiation}
 
 At the beginning, the system would be launched by one of the popular systems, namely Ethereum. We assumed that there are enough traders and different types of coins and cooperation rings in the system. Also, starting with a well-known cryptocurrency will incentivize the users of old systems to join the new ones.

\item 
\emph{Synchronization}:
\label{rule:sync}

The traders need to agree on a checkpoint and the administrator can use a well-mannered strategy called \emph{firing-squad} to implement a synchronization strategy. This strategy is useful at the start of the system for everyone to be synced on the first checkpoint. The \emph{firing-squad} strategy was first proposed by John Myhill in 1957 and published (with a solution) in 1962 by Edward Moore (for more on this see e.g. \cite{firingsquad1,firingsquad2,firingsquad3}).

\item 
 \emph{Entrance}: 
 \label{rule:entrance}
 
%\hl{VAGUE!!!!:Any user who intends to play a role as a new trader in $\LoR$ should receive enough amount of $\ARA$ from another already existing trader. This amount must equal at least to the cheapest service provided in the instance of the system. Their settlements are unrelated to this system. }
Any user who intends to participate in $\LoR$, and play a role as a new trader in an instance of $\LoR$ should buy enough amount of $\ARA$ from another already existing trader. This amount must equal at least to the cheapest service provided in that instance of the system. Their settlements are unrelated to policies of the $\LoR$ system. 

\item
\emph{Policies related to verification teams}:
\label{rule:VTpolicies}

Consider a situation where a trader $t$ participates in at least one cooperation ring of a fractal ring $\FR$. A new verification team $\VT$ is assigned to check $\FR$. 
One of the duties of the members of $\VT$ is to check if $t$'s opinion matches with the majority of the members of the previous verification team that $t$ plays a role in, and this verification team validates a successful fractal ring in a previously passed checkpoint.
If $t$ had confirmed some invalid cooperation rings, the trader cannot submit a fractal ring during the two subsequent next checkpoints of the system. The invalid conferment means the vote of the trader $t$ was not the same as the majority's vote. As we already know, the majority of the members of a verification team are reliable (see Theorem~\ref{thm:Verifsound}), and every trader tries to do the related duties correctly. 
In other words, every successful verification process passes through periodic checkings of cooperation rings. A cooperation ring must be checked during each round to be satisfied by its coworkers. Therefore, the aforementioned policy persuades traders in a verification team incentively check the status of the cooperation rings in a fractal ring and tell the truth.
Therefore, a trader $t$ should complete those duties as a verification team member in checkpoints and through system rounds.

{\bf \small Submission process}: 

Each fractal ring needs to be submitted to the system's tables by a designated verification team. A typical fractal ring, denoted as $\FR_{(\VT)}$, is assigned to the verification team $\VT$. Assuming that the current system state is at checkpoint $\CP_{p}$ and the subsequent checkpoint is $\CP_{p+1}$:

Let's consider a scenario where there are currently a total of $j$ requests for coins being broadcasted within the system before reaching checkpoint $\CP_{p}$, followed by $k$ requests for coins broadcasted after checkpoint $\CP_{p}$.
The fractal rings produced by traders are disseminated throughout the system and will be taken into account during the verification process conducted by their respective verification teams at the upcoming checkpoint.
The verification team, referred to as $\VT$, is tasked with confirming the validity of each cooperation ring denoted by $CR$ within the corresponding fractal ring $\FR$ based on the following criteria:
\begin{enumerate}
    \item 
The coins present in $CR$ must have been broadcasted within the system prior to the occurrence of checkpoint $\CP_{p}$.
\item
Trader $t$ must have a sufficient account balance before checkpoint $\CP_{p}$ to initiate a request for a specific coin ($c$).
\item
The verification team is responsible for verifying the reduction of funds from trader $t$'s account and the subsequent creation of the coin table for $t$. The verification process extends to validating all entries within the coin table created for each coin.
\item
The cooperation table corresponding to each $CR$ is subject to verification to ensure the accuracy of its entries. An example includes verifying the precise calculation of a cooperation ring's weight.
\end{enumerate}

%%% here 

If $\VT$ confirms $\FR$, it is ready to be submitted in the tables of the system, and its cooperations are allowed to start their work by the arrival of $\CP_{p+1}$. Note that the time $\VT$ can spend to do the verification tasks related to the cooperation ring $CR$ is less than or equal to the time between $\CP_{p}$ and $\CP_{p+1}$. If $\CP_{p+1}$ gets passed, all the above-mentioned jobs should get postponed to the next checkpoint. 
%%% #9 %%% a administrator can allow an extra checkpoint for the verification team to do their jobs. 
If $\VT$ does not confirm one of the above-mentioned cases, then $\FR$ cannot get submitted. 
The default policy of the system is to remove the information of $\FR$, and the  trader $t$ should submit another fractal ring for the next checkpoints. 
%%% #9 %%% punish the team so that none of them can create a fractal ring for the next  checkpoints. 

%%% #9 %%% If $\FR$ gets submitted in the system successfully, every cooperation should get started by the arrival of $\CP_{p+1}$.

Following the commencement of activities at the conclusion of the subsequent checkpoint, it is expected that the cooperation rings will conclude their tasks.  
 The verification team $\VT$ compensates traders who successfully complete their collaborations, and this involves updating the account numbers and corresponding entries within the $\RCB$ and $\TCB$ tables.

\item \emph{Constraints of verification teams}:
\label{rule:VTConstraints}

A verification team $\VT$ assigned to a fractal ring $\FR$ has access to only parts of the information of the traders. See the following rules. %Note that the information of the traders is replicated into the memory of many traders. Their previous account and information is available to the current verification team as read-only. The information related to the current checkpoint is modifiable for $\VT$. 
\begin{itemize}
\item 
Traders' information is duplicated across multiple traders' memories. However, except for the current information regarding $\FR$, including associated coins and cooperation rings, all other data is read-only for $\VT$. This implies that $\VT$ lacks the authority to delete accounts or arbitrarily modify the account balance. The role of a verification team is limited to adding to a trader's account based on the \textit{blocked} coins recorded in the trader's memory at the end of a checkpoint. 
%The maximum penalty that a verification team can consider for a trader is specified in Theorem~\ref{thm:usersbenefit}.

\item 
If a previous verification team from a past checkpoint has not fulfilled its obligations to pay the traders, and there are some blocked coins left in the memory of a trader, the current verification team $\VT$ will settle these coins by adding their value to the accounts of the respective traders (the status of the coins will be changed to paid). In essence, at the end of a checkpoint, a verification team relies on the presence of blocked coins in a trader's memory to determine the amount owed to them.

\end{itemize}
%#3%\item
%#2%\emph{Mutual Exclusion}: 
%#2% \label{rule:mutualexculsion}
 
%#2% No trader can submit an already submitted cooperation ring. When a fractal ring is submitted in the tables of the system, all of its cooperation rings are submitted. The tables of the system are mutually excluded, so no two cooperation rings may get submitted simultaneously. $\TCB$ and $\RCB$ tables must be implemented to be mutually excluded for the traders when they want to submit a fractal ring.

 %#3%\item
%#1% \emph{Weight of a Coin}: 
%#1% \label{rule:coinweight}

%#1% $\LoR$ will pay the traders based on their corresponding weights in their cooperation ring. 
%#1% In each cooperation ring, the payment depends on the type of coin and its amount. The weight of a coin can be calculated according to its amount and type. 

\item 
\emph{Earning ARA}:
\label{rule:earningARA}

In the $\LoR$ system, a user can only receive some $\ARA$ from an already member of the system who owns some money (probably with an outline settlement), or the user can participate in a cooperation ring and work or invest in an actual collaboration and receives some $\ARA$. 
This guarantees that one cannot use $\LoR$ to earn money effortlessly. 
%This is another incredible characteristic of $\LoR$.

%#3%\\item
%#2 % \emph{Fake Functions}:
%#2 % \label{rule:functions}
%#2 % Traders create coins, cooperation rings, and fractal rings using specified procedures within the context of $\LoR$. Verification teams are responsible for checking the behavior of traders and their proposed structures to ensure that fake functions are not used. So, if a trader uses a fake function, it will be detected by the verification teams. 

\item
 \emph{How does the system support long-term cooperation?}
 \label{rule:longtermcoop}

After passing a checkpoint, based on the checkpoint time period (by the end of a checkpoint process), the $\TCB$ table will be updated, and the system needs to pay the traders who have at least one round of a successful cooperation ring, and their remaining rounds equals zero. If there still remain some rounds required for a cooperation ring in a fractal ring to get finished, then the payment of that ring should be postponed to the next checkpoint.

%%% #3 %%%%, and the \emph{remaining-round} field in the $\TCB$ table should get updated. The remaining-round column is a field that specifies how many rounds is still remained for the cooperation to get finished.

\item
\emph{Fractal rings Must be valid on the very first try}: 
\label{rule:firstFR}

When a verification team verifies a fractal ring, if even one cooperation ring is invalid or anything else is incorrect, the fractal ring is not valid, and the trader who sends it must resubmit another fractal ring for the next checkpoint.
Hence, everyone should submit a valid fractal ring on the very first try. 
This makes sure that traders have to pay for the initial coins. As a result, the traders trying to cause harm actually paid for coins in the system, and now their coins are locked. This also makes it even better for other traders because they are less likely to encounter harmful traders.

\end{enumerate}

\section{Discussion}
\label{sec:discussion}
%%%\vspace{-10pt}
This paper introduces a novel middleware system that employs randomization to establish a dependable environment for user collaboration, eliminating the need for substantial financial investments or high-performance hardware. Notably, the system operates consistently over time, maintaining its efficiency. This fosters increased user trust and reliable cooperation. The system proves particularly advantageous for enhancing security in freelancing projects and managing Internet of Things (IoT) systems.

$\LoR$ employs two types of rings with a randomized selection structure. These rings continuously initiate and conclude events, ensuring the verification and validation of collaborations.

%%%This study lays down the fundamental principles of $\LoR$. Specific implementations can be tailored in collaboration with industry partners to suit particular needs.

%%%\section{Acknowledgments}

 \bibliographystyle{ACM-Reference-Format}
\bibliography{sample-base}
\appendix

%%%%%%%%%%%%%%%%%%%%%%%%%%%%%%%%%%%%%%%%%%%
%%%%%%%%%%%%%%%%%%%%%%%%%%%%%%%
%%%%%%%%%%%%%%%%%%%%%%%%%%%%%%%%%%%%%%%%%%%%%%%%%%%%%
%%%%%%%%%%%%%%%%%%%%%%%%%%%%%%%%%%%%%%%%%%%%%%%%%%%%%
%%%%%%%%%%%%%%%%%%%%%%%%%%%%%%%%%%%%%%%%%%%%%%%%%%%%%
%\section{More Policies}
%\label{sec:extrapolicies}
%In this part, we'll discuss additional guidelines and regulations of $\LoR$.
%\begin{enumerate}
%\setcounter{enumi}{12}

%\end{enumerate}

\section{Propoerties of $\LoR$}
\label{sec:properties}
Here, we outline some key properties of the $\LoR$ system.

\begin{enumerate}
\item
\emph{Mutual exclusion}: 
 \label{rule:mutualexculsion}
 
No trader can submit an already submitted cooperation ring. When a fractal ring is submitted in the tables of the system, all of its cooperation rings are submitted. The tables of the system are mutually excluded, so no two cooperation rings may get submitted simultaneously. $\TCB$ and $\RCB$ tables must be implemented to be mutually excluded for the traders when they want to submit a fractal ring.

 \item
\emph{Fake functions}:
\label{rule:functions}

Traders create coins, cooperation rings, and fractal rings using specified procedures within the context of $\LoR$.
Verification teams are responsible for checking the behavior of traders and their proposed structures to ensure that fake functions are not used.
 So, if a trader uses a fake function, it will be detected by the verification teams. 

 \item 
 \emph{Traders are motivated to work with only one account}: % #6
 \label{rule:oneaccount}
 
A trader who owns more blocked coins receives more extra money ($\ARA$). This extra money should be considerable to make the traders be incentive to work with only one account.

\item 
\emph{Incentives to create fractal rings}:
\label{rule:FRincentives}

The identity of the trader initiating a fractal ring is inconsequential, as all traders can place their confidence in the verification teams (as indicated in Theorem~\ref{thm:Verifsound}). The increased creation of fractal rings involving traders leads to an accelerated commencement of their respective tasks. Consequently, all traders share an incentive to engage in the creation of fractal rings.

\item
\emph{The probability of a trader joining a verification team is the same for all the traders:}
\label{prop:eaualVTprobability}

%%%%%%%%%%%%%%%%%%%%%%%
%%%%%%%%%%%%%%%%%%%%%%%
%%%%%%%%%%%%%%%%%%%%%%%

%%%%% #31 %%%%%% Traders have the option to participate in multiple cooperation rings.
%%%%% #3 %%%%%%, each involving at least one coin.
%%%%% #31 %%%%%% These cooperation rings are distributed across various fractal rings. The $\TCB$ table contains information about the assignment of traders to verification teams. Suppose each fractal ring consists of $k$ members. By ensuring that each verification team has at least $k$ members, every trader is guaranteed to be part of at least one verification team with certainty. 
Verification teams are chosen according to the uniform distribution on the set of traders with replacement. This ensures that, asymptotically, each trader is expected to appear equally likely as a member of a verification team.

%Since the selection of verification team members is based on a uniform random function, even if we choose a relatively smaller number of traders to join a verification team compared to the fractal ring's members, the probability of a trader being chosen to perform verification duties remains the same for all traders.

%Consequently, over an extended period, each trader works for an equal number of verification teams. This equitable distribution ensures that traders are not motivated to avoid being part of a verification team.

\end{enumerate}
\subsection{Some particular charactristics of $\LoR$ }
\label{subsec:characteristics}
 The spirit of all the previously collaborative distributed systems might be different from the $\LoR$ system in the below-mentioned features: 
\begin{enumerate}
% #20 % \item {\bf Blockchain Based}: There is a decentralized storage table that houses participant information. This table can be effectively realized through blockchain technology, serving as a secure repository for historical and essential data.

 \item {\bf Security and reliability}:
 \label{prop:security}
 The decentralized random structure of the procedures, as well as an intentional lock mechanism used in $\LoR$, makes this platform highly secure against malicious traders. In fact, $\LoR$ incentivizes honesty. The system employs strategies to motivate truthful job performance.
 
 $\LoR$ is secure from many attacks. To engage in collaboration within the $\LoR$ system, a user needs to invest by purchasing an adequate amount of the internal cryptocurrency employed by the system. This investment serves as a preventive measure against numerous types of attacks. This requirement differs from many previous systems where such an investment is not mandatory.
    
Moreover, due to Theorem~\ref{thm:Verifsound} the verification teams are highly reliable. A uniform distribution chooses each trader with the same probability of being among a verification team. So, considering a long period of time, the traders are participated in the same number of verification teams. The behavior of a verification team member will be checked during the following checkpoints of the system in such a way that it would be better for them to do their duties correctly.

%These characteristics and other policies of the system make $\LoR$ to be highly reliable in comparison to the previous decentralized systems.
See Section~\ref{sec:attacks} for more information on examples of how the system prevents potential attacks.
 
 \item {\bf Cooperation vs. Transaction}:
 \label{prop:coop_vs_tran}
 Previous systems, such as Algorand, focus on transactions.
Cooperation in the $\LoR$ system holds broader significance compared to transactions. While transactions involve exchanges between two individuals, cooperation encompasses interactions among multiple (probably more than two) users. It extends beyond mere money transfer and involves the shared effort of collaborators to complete a task over a specific period. In this context, system security goes beyond financial aspects. It centers on ensuring the satisfaction of all traders with their respective tasks and maintaining their efficient functionality.

\item {\bf Ensures fairness (No starvation)}:
Slower users are not left disadvantaged.
Each trader has the capability to initiate coin creation requests, and this process is independent of the functionality of other traders. This means that every trader can create coins, which are then shared with the system. The system, in turn, can combine these coins to form cooperation rings. The true competition among traders centers around transforming these cooperation rings into fractal rings. In essence, traders aim to secure more coins to establish both cooperation and fractal rings, thereby expediting their own work processes. Consequently, even a trader with slower coin creation can still collaborate effectively with other traders. Furthermore, due to the consistent distribution of random processes in $\LoR$, coins of the same type possessed by different traders have an equal likelihood of being incorporated into cooperation and fractal rings.
%last important % If a trader forms a coin, and another trader utilizes this coin by submitting it in a cooperation ring to the system, both traders will gain the advantage of submitting their cooperation rings earlier. This allows them to start working and earning money sooner. So traders will help each other. 

\item {\bf No collusion}:
As mentioned previously, the system assigns coins to each other in a random manner. This means that traders cannot select or pick their collaborator's coins. This approach ensures that there are no expected or pre-arranged co-workers, and it eliminates the possibility of collusion among traders.

\item {\bf No inflation}: 
\label{chr:inflation}
%$\LoR$ \hl{does not pay all the} settlements expected by the trader for their successful rounds. This way,
$\LoR$ intends to keep the primary currency ($\ARA$) valuable in time.
Considering a fractal ring, if we know that equal or less than $R$ rounds were successful, the verification team will pay a percentage of each trader's fee; for example, ninety percent. The admin could adjust the value of this fee at the system's initiation step or by a fixed procedure that uses some computations. These computations are based on the application and should ensure that by the passage of time in the system, the inflation of the primary currency gets around zero.

%#2 % Inflation does not happen in $\LoR$. See rule~\ref{rule:inflation} for details. 

\item {\bf No need to detect offenders}: The $\LoR$ system doesn't identify individual offenders; rather, it penalizes all users involved in an unsuccessful cooperation ring. This approach is elaborated upon in the paper, where system policies utilize this method to maintain trustworthy connections between users and hold accountable those who exhibit unfavorable communication or cooperation behavior. This mechanism is detailed in Payment and Policies, rules~\ref{rule1} and~\ref{rule.wrong.round.policy}.

\item {\bf Versatility}: 
The $\LoR$ system serves as a versatile and general-purpose platform. It is not tailored to a specific function, be it IoT management or cooperative computing, or any other specific purpose. The $\LoR$ system has the capability to accommodate any service that satisfies the criteria outlined in subsection~\ref{subsec:coins}.

\item {\bf Being balanced}:
\label{char:balanced}
Traders have a strong incentive to work together across different services. When the number of requests for a particular service becomes substantial, traders find it more advantageous to collaborate on additional services. This is the reason we anticipate a significant portion, if not all, of the traders to be capable of providing a diverse range of services.

%%%% # 31 Later %%% \item {\bf Timelessness}: The performance of the $\LoR$ system remains consistent and practical as time goes on. However, there is an exception: if the database is initially built on a blockchain system, it might eventually need to transition to a different platform. This is because blockchain systems tend to become less efficient over time~\cite{carlsten2016instability,kiayias2016blockchain}.

\item {\bf Working legitimately}: In the $\LoR$ system, every trader is required to actively participate and contribute to earn money. Users must either offer a service or invest in one. The system ensures that no energy or effort is wasted – every action has a purpose.
Note that traders without any coins won't gain any advantages from submitting fractal rings.
 This encourages traders to actively engage in the system, generate additional cooperation rings, and increase their earnings.

\item {\bf Fractal structure}:
The $\LoR$ system gains an advantage from its fractal structure, which greatly boosts its flexibility. This design not only simplifies system recovery but also streamlines the process. It's noteworthy that any collaboration ring can seamlessly fit into a fractal ring, effectively acting as a fractal within that larger structure. This arrangement aids in distributing verification teams evenly across all trades. Moreover, the system can rapidly generate new fractal rings whenever the need arises.

\item {\bf Motivating users}:
The $\LoR$ system does not deal with the issues due to the lack of motivated users. This problem has also been studied by \cite{brown2018formal,carlsten2016instability,kiayias2016blockchain}. A trader earns more money than expected. The $\LoR$ system considers an extra $\ARA$ for a trader after the completion of a cooperation ring. This motivates the trader to work more in the system. On the other hand, the members of verification teams are motivated to do their duties correctly to submit fractal rings. 
These and some other policies of the system motivate users to work on the $\LoR$ platform and be honest.
%A user who needs a service to be done can receive the required services from other traders plus 
%%%% # 31 Later %%% \item {\bf Historical Check}: Another crucial responsibility of the verification team is to assess whether a trader has satisfactorily fulfilled their obligations as a member of previous verification teams. Elaborated details regarding the specific duties and functions of the verification team can be found in Subsection \ref{rule:VTpolicies}.

\end{enumerate}

%%%%%%%%%%%%%%%%%%%%%%%%%%%%%%%%%%%%%%%%
%%%%%%%%%%%  End of Ploicies %%%%%%%%%%%

\section{The $\LoR$ system in details}
\label{sec:system}
This section covers the details of each entity of the $\LoR$ system.
\subsection{Administrator}
\label{subsec:admins}
In every instance of the $\LoR$ system, there exists a single administrator. 
Aside from the selection of services and their respective prices, an administrator has the option to modify the infrastructure of the instance utilized for an application.
Additionally, the administrator holds the authority to introduce alternative methodologies in implementing specific facets of the system. To illustrate an example, the administrator may opt for a strategy centered around reputation when appointing verification teams. It is important to note, however, that the $\LoR$ system solely guarantees the effectiveness of the methods initially delineated within its original specifications.

   Note that the specified service prices should be reasonable and align with both the application's requirements and user expectations. This equilibrium prevents scenarios of excessive funds that discourage participation or insufficient funds that compromise the system's robustness. To manage factors like inflation, the system retains the ability to adjust payments (refer to Subsection~\ref{subsec:characteristics}, 
   %[\hl{item}]
   Characteristic~\ref{chr:inflation}). So, in the $\LoR$ system, the prices remain constant by the passage of time.
\subsection{Traders}
\label{subsec:traders}
Each trader possesses an account, a wallet, a distinctive ID, and various other details, all of which are stored within an entry in the $\TCB$ table, a topic we will delve into subsequently. The wallet consists of both a private key and a public key, operating in the same manner as the wallets employed in earlier systems~\cite{cite:mining}.

A user must obtain a sufficient amount of internal crypto-currency (ARA) from an established trader within the system. The minimum required ARA for a trader to participate in a cooperation ring corresponds to the cost of the most affordable service, as specified by the administrator. %%%% #31 Iter %%%

%%%%%%%%%%%%%%%%%%%%%%%%%%%%%%%%%%%%%%%%%%%%%%%%%%%%%
%%%%%%%%%%%%%%%%%%%%%%%%%%%%%%%%%%%%%%%%%%%%%%%%%%%%%
%%%%%%%%%%%%%%%%%%%%%%%%%%%%%%%%%%%%%%%%%%%%%%%%%%%%%
\subsection{Coins}
\label{subsec:coins}
Traders have the ability to contribute to various types of services. These services are represented by units known as "coins" within the $\LoR$ system. %%%% #31 Iter %%%

In order for a service to be approved for integration into the $\LoR$ system, it must meet the following specified criteria.

{\bf Requirements for a Service to be a \emph{Coin}}:
\label{servicerequirements}
\begin{enumerate}
    \item It needs to meet the qualification for propagation through the Internet.
    \item It needs to be measurable and quantifiable. This implies that a coin should have a fundamental unit, allowing any request to be represented as a multiple of that unit. This division should consider factors like qualification level, time allocation, and resource allocation. For instance, a trader might contribute computational power as a CPU-coin through time-sharing.
\item %\hl{VAGUE!!! 
The majority of traders should have the ability to offer this service to others. In other words, a service defined as a specific type of coin should be a universal capability accessible to all (or the majority) of the traders (see Characteristic~\ref{char:balanced} to know about the reason).
\item $\LoR$ has some running rounds. A coin type should have the capability of getting split into as many rounds as it takes for a cooperation ring to get finished. 
Imagine that, for instance, after three rounds, the connection between the involved traders is unexpectedly disrupted. In this situation, the system should be able to calculate the payments owed to the traders who contributed in the preceding three rounds. Therefore, a service tied to a coin's definition must be designed in a way that allows for flexible division into rounds. Each round should hold substantial value to warrant payment.
\end{enumerate}
 %In this system, a capability or possession is considered as a type of currency. For example, if users own powerful hardware or a large amount of storage their capabilities are considered as units of a specific type of currency.  

In each instance of the $\LoR$ system, these coins are categorized into distinct classes. 
Among these classes, there exist specific types referred to as ``investment coins''. When traders request investment coins, they are signifying their intention to participate in an investment for the initiation of a collaborative endeavor.

As previously explained, each coin is represented by a distinct structure referred to as a  "coin table' (refer to Figure~\ref{fig.Coin.Structure}). The coin table comprises a single column and several rows, with each row representing a specific specification. This arrangement serves to establish connections among collaborators, forming cooperation rings.

In practice, the coins owned by collaborators engaged in a cooperation ring are subjected to a locking mechanism, which will be further elaborated upon in Subsection~\ref{subsubsec:lock}.

\setlength{\textfloatsep}{5pt}
\setlength{\intextsep}{10pt}
\begin{figure}[ht]%figure1
\begin{center}
\includegraphics[width=0.6\textwidth]{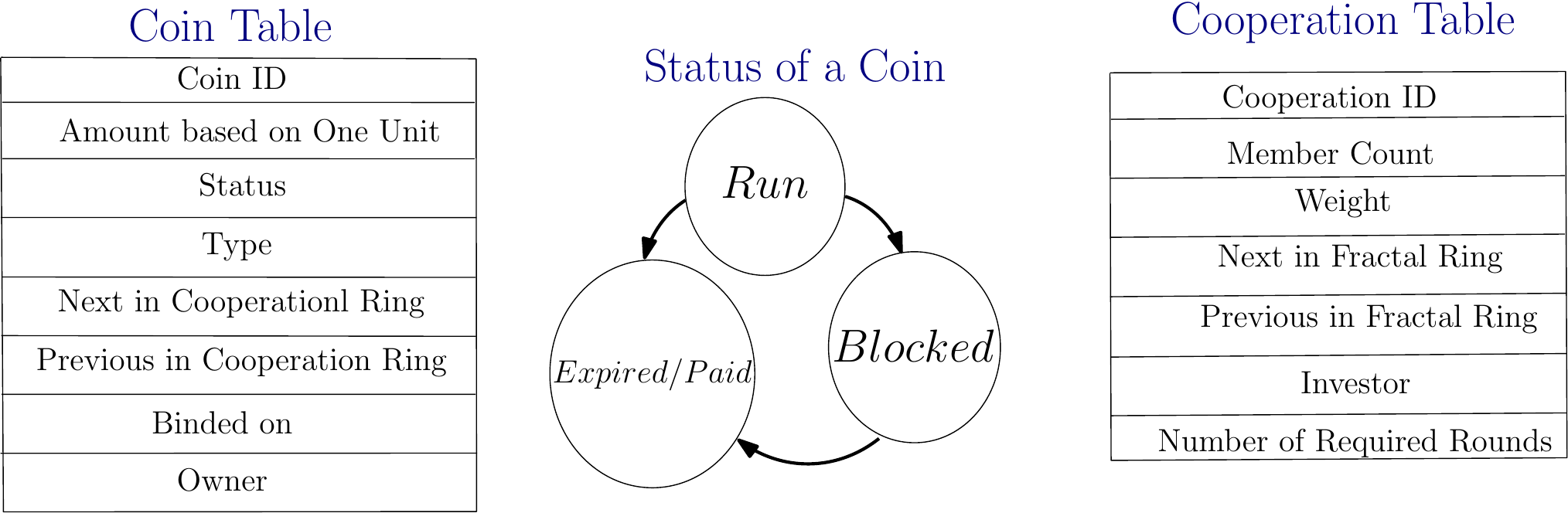}
\caption{{\small The coin table and the cooperation table. The left table includes all information required for each currency. The right table illustrates the cooperation table, which includes the information required for a cooperation ring in a fractal ring. }} 
\label{fig.Coin.Structure}
%\label{fig.group.Structure}
\end{center}
\end{figure}
%%%% here

As previously noted, the $\LoR$ system operates with a primary currency referred to as $\ARA$, which determines the cost value of each coin. The value of one unit of this currency is represented by $\mathcal A$.

In every cooperation ring, there exists a designated trader known as the "investor." To assume this role within a cooperation ring, a trader must request investment coins. When trader $t$ makes a request for a specific type of coin, the system employs a standardized function $f$ to generate a corresponding coin table based on trader $t$'s request (refer to Rule~\ref{rule:functions} for information about how these functions are secure).

The entries of a coin table are explained below; 
\begin{enumerate}
\label{cointable}
 \item 
    \emph{Coin-ID}: Every coin has a unique ID. This ID will be generated by $f$. 
\item \emph{Amount-based-on-One-Unit}: 
This entry discloses the price of the coin in terms of $\ARA$, one unit of $\ARA$ is denoted by $\A$, for example, $0.45\A$.
   \item 
   \emph{Status}: 
This entry has four possible values: Run, Blocked, Expired, and Paid.

1- If a coin has been used in a previous cooperation ring, its status becomes \emph{Blocked}.

2- When cooperation concludes and a coin resides in the memory of an \emph{investor}, its status changes to \emph{Expired}.

3- If the coin, which is owned by an investor and stored in the memory of other traders as their payment, its status changes to \emph{Paid}.

4- When a coin is available for a trader to associate with a group of other coins, its status should be \emph{Run}. This means it is a newly created coin based on a request.

At the conclusion of a checkpoint, the verification teams remunerate the traders based on the status of the coins, as indicated through this record.
    \item  \emph{Type}: 
   This record indicates the category of a coin. It cannot be changed once entered. Importantly, traders are not allowed to modify or delete any coin category within the system. Each coin category should have a well-defined \emph{exchange rate} relative to $\ARA$.
    \item  \emph{Next/Previous-in-cooperation ring}: 
A collection of colleagues needs to be connected within a cooperation ring. This ring is established based on the coins rather than the traders themselves. A trader has the option to participate in multiple cooperation rings by utilizing different coins. Two connections within coin tables play a role in forming these cooperation rings.
    \item  \emph{Bind-on}: 
   When a coin is held in another trader's memory, this record indicates whose memory it is stored in. The $\LoR$ system employs a method to establish an artificial lock. This approach can be thought of as a simulated deadlock, where traders wait for each other until a checkpoint is reached (refer to Subsection \ref{subsubsec:lock} for more information). This record serves the purpose of that strategy. The locking strategy mandates that traders who seek investment must pay first before receiving any services. Additionally, this strategy guarantees that all working traders complete their tasks accurately before receiving payment.
    \item \emph{Owner}: The owner of a coin is the trader who asked for the coin before.
\end{enumerate}

%%%%%%%%%%%%%%%%%%%%%%%%%%%%%%%%%%%%%%%%%%%%%%%%%%%%%
%%%%%%%%%%%%%%%%%%%%%%%%%%%%%%%%%%%%%%%%%%%%%%%%%%%%%
%%%%%%%%%%%%%%%%%%%%%%%%%%%%%%%%%%%%%%%%%%%%%%%%%%%%%
\subsection{Cooperation Rings}
\label{subsec:cooperationrings}
A group of traders, selected at random for collaboration, is portrayed within a cooperation ring structure. To introduce randomness, a viable approach could mirror the technique employed in \cite{gilad2017algorand}, which relies on verifiable-random-functions \cite{micali1999verifiable}.
Every cooperation ring possesses a unique organizational layout termed a \emph{``cooperation table"}, as depicted in Figure~\ref{fig.Coin.Structure}. This cooperation table empowers a trader to establish a fractal ring by picking a suitable subset of the available cooperation rings within the system. To generate a cooperation table for each cooperation ring, a trader is required to execute a specific function.
Traders who partake in a cooperation ring are limited to initiating their activities from a designated checkpoint established by the system (the corresponding verification team).

%%%% #9 %%%% The policies of the system ensure dependable and trustworthy collaboration, supported by a high level of assurance. These policies are elaborated upon in Section~\ref{policies}.

\subsubsection{\bf The Lock Strategy}
\label{subsubsec:lock}
Every coin that a trader uses within a cooperation ring is linked to the memory of another trader. The state of these coins undergoes a transition from \emph{Run} to \emph{Blocked}.
This deliberate locking mechanism creates a situation where coworkers are awaiting one another. The system, operated by a verification team, enforces a blockage on coins within a coworker's designated memory. This blockage persists until the cooperation concludes or until the system forcefully terminates the cooperation. This mechanism serves to effectively enforce dependable payment protocols within the context of the $\LoR$ system (see Section~\ref{policies} for the payment policies). 
Traders will not receive payment or gain any benefits from their collaborations until all the trades within a cooperation ring have been fulfilled to the last successful round.
Observe that the structure of a cooperation ring establishes simple links that enable coins to be easily accessed by verification teams.

Furthermore, these verification teams can retain relevant coins and cooperation ring information, enabling them to identify any intentional fraudulent modifications. This heightened security measure contributes to the overall robustness of the system.

\setlength{\textfloatsep}{5pt}
\setlength{\intextsep}{0pt}
\begin{figure}[ht]%figure3
\begin{center}
\includegraphics[width=0.7\textwidth]{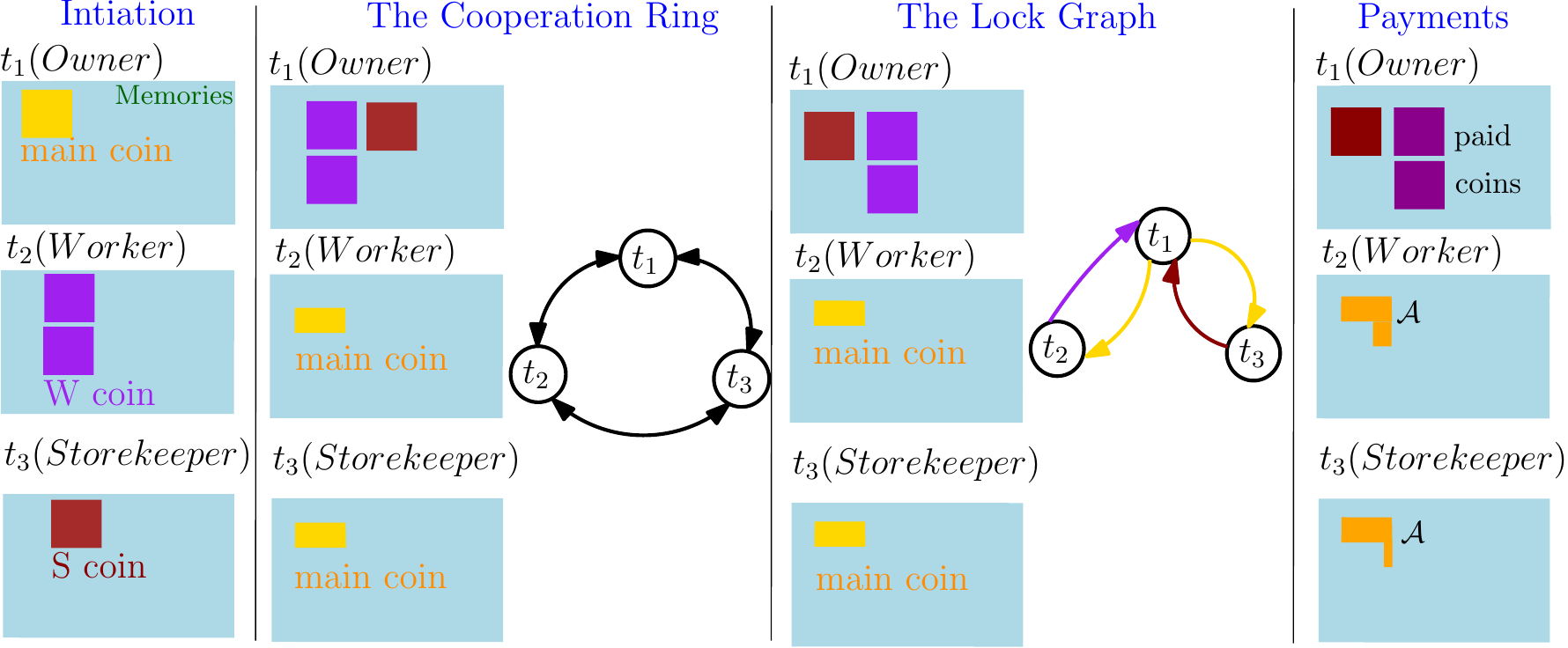}
\caption{%%%An example to show how a cooperation ring works. 
{\small An example to show how we use an intentional lock to make sure no one can access used coins in their own cooperation rings without the system's permission. Traders will receive some extra money ($\ARA$) based on their contributions. Trader $t_{1}$ owns one \emph{investment coin}, trader $t_{2}$ has two $W$-Coin, and trader $t_{3}$ has one $S$-coin in their memories individually. These coin names stand for a calculating work ($W$-coin) and provide enough space as storage ($S$-coin). The system has already selected these three traders randomly to be in a group for cooperation. 
In the lock graph, $t_{1}$ must pay $t_{2}$ and $t_{3}$ based-on their contribution. In this group $t_{3}$ and $t_{2}$ are waiting for $t_{1}$ to pay them. 
The trader $t_{1}$ is waiting for $t_{3}$ and $t_{2}$ to provide its requirements resource.} }
\label{fig.lock.example}
\end{center}
\end{figure}

An example of a lock graph is illustrated in Figure~\ref{fig.lock.example}. 
Through the intentional locking mechanism, the status of each coin is changed to \emph{Blocked}. To expedite the payment process, verification teams modify the coins (possibly with a partitioned amount) and relocate them to traders' memories based on the lock graph (refer to Figure~\ref{fig.lock.example}). However, if a cooperation ring terminates, the payment process is handled differently. Please refer to Section~\ref{policies} for related policies.
As previously discussed, the \emph{Bind-on} entry within a coin table must be configured with the trader-ID of the trader responsible for holding that coin within their memory.
Modifications to the coin table for a specific coin are solely permissible through a decision reached by the majority of verification teams.

The comprehensive depiction of entries within the cooperation table (refer to Figure~\ref{fig.Coin.Structure}) is elaborated
below. 
%%%% #9 %%%%% upon in detail in Appendix~\ref{sec.groups.structure}.
%%%% #9 %%%%% \section{Cooperation Table Entries}
%%%% #9 %%%%% \label{sec.groups.structure}
%%%% #9 %%%%% In this subsection, we describe the entries of a cooperation table.
\begin{enumerate}
    \item Cooperation ID: Each cooperation ring is assigned a distinct and universally unique identifier (ID) that remains consistent across the entire $\LoR$ system's lifespan.
    \item Member Count: This parameter denotes the number of traders participating within a cooperation ring.
    \item Weight: This is the summation of all coins amounts involved in the cooperation ring, based on $\ARA$.
    The amount of a coin demonstrates its cost value in the system and can be calculated based on its amount and type.
    The \emph{Weight} of a cooperation ring is calculated based on the cost values of all of its coins. 
    This entry can help the verification team calculate the budget that the system should consider for the cooperation ring as payments.  
    \item Next-in-fractal-ring: Indicates the next cooperation ring in a fractal ring assembled by a trader.
    \item Previous-in-fractal-ring: Indicates the previous cooperation ring in a fractal ring assembled by a trader.
    \item Investor: This is the ID of the trader who invests to start a cooperation ring. 
\end{enumerate}

$\star \ \ $ Each trader must use an algorithm to assign coins to a cooperation ring. This algorithm follows a clear process to match coin types within the ring correctly. To bring randomness to its choices, the algorithm uses a special function called $\SHA_{256}$, which is like a digital dice roll. By using $\SHA_{256}$, the algorithm makes sure its choices are random. Importantly, the outcome of $\SHA_{256}$ can be checked to fit the cooperation ring's context, preventing anyone from dishonestly setting a fixed assignment. The pseudo-code outlining the mechanics of this algorithm is illustrated in Algorithm~\ref{algo.3}. In this pseudo-code, $F_{i}$ is a regulating function that creates a cooperation ring out of the result of $\SHA_{256}$. This function generates a suitable cooperation table and other corresponding requirements. Using the investment requests and their corresponding investing coins as a guide, the algorithm randomly finds the right coworkers for collaboration.

\setlength{\textfloatsep}{5pt}
\setlength{\intextsep}{10pt}
{\small
\begin{algorithm}[H]
\caption{Algorithm to create cooperation rings}
\begin{algorithmic}[2]
\Procedure{Randomized}{}
\State define

$\mathcal{CR}$ = The output cooperation ring.

$M$ = A  selected  investment coin.

$|M|$ = The number of required coins.

$F_{i}$ = A deterministic algorithm to assign the right type of coins to $\mathcal{CR}$.
\For{ i= 2 ; $i \leq  |M|$; i++}
    \State $\mathcal{CR}_{[i]}  = F_{i}(\SHA_{256}(M))$  
\EndFor
\State Return $\mathcal{CR}$
\EndProcedure
\end{algorithmic}
\label{algo.3}
\end{algorithm} }
%%%%%%%%%%%%%%%%%%%%%%%%%%%%%%%%%%%%%%%%%%%%%%%%%%%%%
%%%%%%%%%%%%%%%%%%%%%%%%%%%%%%%%%%%%%%%%%%%%%%%%%%%%%
%%%%%%%%%%%%%%%%%%%%%%%%%%%%%%%%%%%%%%%%%%%%%%%%%%%%%
%\vspace{-15pt}

%\textit{The details of other components of $\LoR$, Fractal rings and $\RCB$ an $\TCB$ tables are mentioned in Appendix Section~\ref{sec:extracomponents}}.
 
\subsection{Fractal Rings}
\label{subsec:fractalring}
A fractal ring is a block of cooperation rings created by a trader and verified by a verification team. A trader $t$ should assemble a random number of
cooperation rings to create a fractal ring (depicted as Fractal Rings in Figure~\ref{fig.overal}). 
This ring arrangement offers a straightforward approach for accessing and verifying cooperation rings. The precise quantity of cooperation rings needed to construct a fractal ring is subject to randomness. 
A cooperation ring is a fractal element of a fractal ring; any cooperation ring that exists in the system can be assembled for a fractal ring to be created. 
As the construction of a fractal ring lacks deterministic predictability, this characteristic prevents any exploitation of the system's rules.
To elaborate, if the number of cooperation rings within a fractal ring is mandated to be a fixed quantity, it could lead to a competitive race between the traders to gather the terminative cooperation rings for their fractal rings. 
Also, with a fixed number of fractals, the malicious traders might request a large number of cheap coins and request low-weighted cooperation rings, and the high-weight coins and cooperation rings might starve. 

Algorithm~\ref{algo.4} presents a pseudo-code that outlines the process of determining the number of cooperation rings needed within a fractal ring for it to be deemed valid, thereby allowing a trader to submit that fractal ring. In this algorithm, $EC_{f}$ denotes the expected number of required cooperation rings to submit a successful fractal ring. The deterministic function $F_{1}$ uses a random parameter to use $\SHA_{256}$ more securely. For this function, we assumed for $EC_{f}$ to be $500 \leq EC_{f} \leq 2000$. The function creates a uniform random number between $500$ and $2000$. Consequently, $EC_{f}$ coordination rings can merge into a fractal ring.

\setlength{\textfloatsep}{5pt}
\setlength{\intextsep}{5pt}
\begin{algorithm}[H]
\caption{Randomized Algorithm to find the number of cooperation rings required to submit a fractal ring}
\begin{algorithmic}[4]
\Procedure{Randomized}{}
\State define $EC_f = 500$, $H = Random \ number$; \small{Set the minimum number of cooperation rings in the fractal ring to be $500$.}
\State $FC_f = F_{1}(\SHA_{256}(H,Fractal[i]))$
\State Return $EC_f$
\EndProcedure
\end{algorithmic}
\label{algo.4}
\end{algorithm} 

Algorithm~\ref{algo.2} reveals how a trader $t$ can create a fractal ring denoted by $\FR$. 
The function denoted as $F_{2}$ constitutes the secondary random function, which is responsible for associating a Cooperation-ID with a randomly generated number produced by $\SHA_{256}$ (note that $\SHA_{256}$ is employed in the algorithms to establish uniform random functions). Remember that once a cooperation ring has been submitted, it is precluded from being submitted in any other fractal ring.

\setlength{\textfloatsep}{5pt}
\setlength{\intextsep}{5pt}
\begin{algorithm}[H]
\caption{Randomized Algorithm to submit a fractal ring}
\begin{algorithmic}[2]
\Procedure{Randomized}{}

$|\mathcal{FR}|$ = \small{The number of cooperation rings required to create $\FR$}.

$CR_{i}$ = \small{The $i$th cooperation ring of $\FR$}.

$CR_{1}$ = \small{A cooperation ring which is assembled by $t$ to start $\FR$}.

$F_{2}$ = \small{A deterministic algorithm that assigns the Cooperation-IDs to $\SHA_{256}(CR_{i-1})$. Based on the data of a cooperation ring, $F_{i}$ sets another cooperation ring}.

\For{ i= 2 ; $i \leq  |\mathcal{FR}|$ ; i++}
    \State $CR_{i}  = F_{2}(\SHA_{256}(CR_{i-1}))$ 
\EndFor
\State Call Algorithm~\ref{algo.1}
\State Return $\FR$
\EndProcedure
\end{algorithmic}
\label{algo.2}
\end{algorithm}

Algorithm~\ref{algo.2} calls Algorithm~\ref{algo.1} to establish a verification team specifically for the recently formed fractal ring denoted by $\FR$. 
This team consists of $\kappa$ members, with the selection of these members being conducted randomly from the entire pool of traders within the system. The obligations and functions of this verification team are expounded upon in detail in accordance with Rule~\ref{rule:VTpolicies}. The third random function of the $\LoR$ system is used by Algorithm~\ref{algo.1}.
It is important to highlight that every trader has an equal likelihood of being assigned to a verification team (refer to Property~\ref{prop:eaualVTprobability}).

\setlength{\textfloatsep}{5pt}
\setlength{\intextsep}{5pt}
\begin{algorithm}[H]
\caption{Selecting members of a Verification Team for a fractal ring $\FR$}
\label{alg1}
\begin{algorithmic}[1]
\Procedure{verification team}{}
\State define  $\kappa =  503 \leq Random \ Number \leq \ the \ number \ the \ members \ of$ $\FR$;
\State define $\VT_{i}$ = The $i^{th}$ member of the current verification team.
%\While{$|VT| \leq 1000$}
\For{ i= 1 ; i <=  $\kappa$; i++ }
    \State $\VT_{i}  =$ $\mathit{\SHA}_{256}(i,Users[i]\ in \ \VT)\%\kappa$\ 
\EndFor
\State $i$++
%\EndWhile
\State Return $\VT$
\EndProcedure
\end{algorithmic}
\label{algo.1}
\end{algorithm}

%%%\setlength{\textfloatsep}{5pt}
%%%\setlength{\intextsep}{10pt}
%%%\begin{figure}[htp]%figure4
%%%\begin{center}
%%%\includegraphics[width=0.5\textwidth]{}
%%%\caption{ A fractal ring. A verification team can easily trace all the fractals in their corresponding fractal ring.}
%%%\label{fig.farctal.ring}
%%%\end{center}
%%%\end{figure}

%%%%%%%%%%%%%%%%%%%%%%%%%%%%%%%%%%%%%%%%%%%%%%%%%%%%%
%%%%%%%%%%%%%%%%%%%%%%%%%%%%%%%%%%%%%%%%%%%%%%%%%%%%%
%%%%%%%%%%%%%%%%%%%%%%%%%%%%%%%%%%%%%%%%%%%%%%%%%%%%%
%%%%%%%%%%%%%%%%%%%%%%%%%%%%%%%%%%%%%%%%%%%%%%%%%%%%%
\subsection{Ring Control Block and Traders Control Block}
\label{subsec:TCBRCBCP}
The details of each fractal ring's information are stored in a table labeled as $\RCB$. Each cell within the $\RCB$ table corresponds to a specific cooperative ring within a fractal ring.
Every trader involved in a fractal ring, along with the traders participating in the associated verification team, can possess a duplicate of this table.

Similar to a cooperative ring, a fractal ring possesses a distinct universal identification that can be traced within the $\TCB$.
The coin table, cooperative table ($\RCB$), as well as traders' details encompassing their financial standing, cooperative rings, initiation point, and conclusion point of a cooperative ring (as per the round number designated by the system), are all stored within the trader's memories. As a result, multiple copies of this data are present in traders' memories.
The aggregate compilation of these individual cache memories defines the $\TCB$. Each trader retains a subset of this database; however, this subset solely comprises the pertinent information required by the trader.
This database must be regularly updated whenever a new fractal ring undergoes verification (submission) by a verification team.

Each verification team evaluates the condition of its associated fractal ring. This evaluation encompasses various aspects, such as calculating the payments due to traders up until the most recent checkpoint. For additional information, please refer to the policies outlined in Section~\ref{policies}.

For the implementation of $\TCB$, one viable approach is to utilize a distributed storage platform like Swarm\footnote{\url{https://ethersphere.github.io/swarm-home}}~\cite{buterin2013ethereum} within the Ethereum framework.
It is important to highlight that the system administrator must devise a solution to ensure mutual exclusion for both the $\RCB$ and $\TCB$ tables. Enforcing mutual exclusion for storage within distributed systems has posed a longstanding challenge. However, this issue has been tackled in previous research, as evidenced by references such as~\cite{ramachandran1994distributed,taubenfeld2004black}.

%%%%% here
%%%%%%%%%%%%%%%%%%%%%%%%%%%%%%%%%%%%%%%%%%%%%%%%%%%%%%%%%%%
%%%%%%%%%%%%%%%%%%%%%%%%%%%%%%%%%%%%%%%%%%%%%%%%%%%%%%%%%%%
%%%%%%%%%%%%%%%%%%%%%%%%%%%%%%%%%%%%%%%%%%%%%%%%%%%%%%%%%%%

\section{Security and Attacks}
%\subsection{Attacks}
\label{sec:attacks}
%%%%% # 10 %%%%% An attack is a malicious attempt to disrupt the normal behavior of a system. Security is to keep the system safe against attacks. Attacks may come from an outside offender or from a user who is inside the environment of a system. Reliability is to make sure a system runs as it is expected. 
In this section, we will establish the meanings of attack, security, and reliability. Subsequently, we will outline several reasons illustrating why $\LoR$ stands out as a secure and reliable system.
\begin{definition}[\emph{Attack}]
\label{def:attack}
   
An attack refers to a deliberate and malicious action taken to compromise the confidentiality, integrity, or availability of a computer system or its data. Attacks can take various forms, such as viruses, malware, phishing, denial-of-service (DoS) attacks, and more. Attackers may aim to steal sensitive information, disrupt services, gain unauthorized access, or cause other types of harm to the targeted system.
\end{definition}

\begin{definition}[\emph{Security}]
\label{def:security}
Security involves implementing measures and protocols to protect a system from unauthorized access, data breaches, and various types of attacks. It encompasses a range of practices, technologies, and strategies aimed at safeguarding computer systems, networks, and data. Security measures can include firewalls, encryption, access controls, intrusion detection systems, regular updates, and user training to prevent and mitigate potential threats.
\end{definition}
\begin{definition}[\emph{Reliability}]
\label{def:reliability}
Reliability refers to the ability of a system to consistently perform its intended functions without failure or interruption. A reliable system is one that can deliver its services and handle tasks as expected, maintaining its performance and availability over time. Achieving reliability often involves redundancy, fault tolerance, backup systems, and effective maintenance practices.
\end{definition}
It is important to note that security and reliability are closely interconnected. A secure system is often more likely to be reliable, as security measures can help prevent disruptions caused by attacks. Similarly, a reliable system can contribute to better security, as an unstable or unreliable system may be more vulnerable to exploitation by attackers.

In the realm of computer systems, achieving a balance between security and reliability is crucial for maintaining the functionality and integrity of the systems and the data they handle.
%%%%%%%%%%%%%%%%%%%%%%%%%%%%%%%%%%%%%%%%%%%%%%%%%%%%%
%%%%%%%%%%%%%%%%%%%%%%%%%%%%%%%%%%%%%%%%%%%%%%%%%%%%%
In the sequel we explore several attack scenarios and how the policies of $\LoR$ serve to counteract potential threats. For a broader perspective on different types of attacks in distributed or decentralized systems, refer to \cite{citeus,li2017survey}.

$\LoR$ enforces a set of policies that play a pivotal role in securing the system as follows:

\textit{Uniform Verification Team Selection}: Members of verification teams are chosen at random from the entire pool of traders.

\textit{Optimized Round Timing}: The system's administrator schedules round times to accommodate various delays, such as network latency. Balancing security and performance is a consideration in this context.

\textit{Checkpoint Timing}: Checkpoint times are set to align with the average service runtime in cooperation rings. This grants verification teams ample time to submit their votes. Furthermore, the system's resilience isn't reliant on a single verifier's vote due to majority-based decision-making.

\textit{Account Creation Payment}: Registration for the system involves a payment for each account, discouraging frivolous account creation.

\textit{Encouraging Single-Account Usage}: Users are incentivized to operate within the $\LoR$ system using a single account.

Let's examine key attack instances and their corresponding $\LoR$ responses.

\textbf{Double Spending}: This occurs when a coin is used twice. $\LoR$ tackles this by employing mutually exclusive $\RCB$ and $\TCB$ tables. A deliberate delay can be introduced to allow repeated checks, though at the expense of system speed.

\textbf{Long Delay Attack}: An attack involving a task with substantial duration. $\LoR$ limits new service introduction, but collaboration can span extended periods, aligned with Rule~\ref{rule:longtermcoop}. A reliable environment supports cooperation, and any dissatisfaction halts a round's cooperation.

\textbf{Theft of Resources}: Unauthorized resource use without payment is deterred as traders must initially invest in the primary currency, and misuse within a round is short-lived due to termination based on collaboration satisfaction.

\textbf{Honesty Attack}: Discouraging dishonest claims about offering services or payments. Regular checks and penalties are imposed by $\LoR$, and the round-processing time can be adjusted to be very short.

\textbf{Iterative False Generation}: Attack types like Sybil and Peer Flooding, which involve creating numerous pseudonymous identities. $\LoR$ mitigates these by imposing significant costs for account creation and random collaborator selection.

\textbf{Adversarial Centralization of Consensus Power}: Collaborative efforts ensure random verification team selection, reducing the risk of centralization. Policies for cooperation ring verification and punishment for dissenting members deter malicious collaboration attempts.

By aligning its policies with these strategic responses, $\LoR$ provides a robust defense against various attack vectors.

%%%%%%%%%%%%%%%%%%%%%%%%%%%%%
%%%%%%%%%%%%%%%%%%%%%%%%%%%%%
%%%%%%%%%%%%%%%%%%%%%%%%%%%%%

\end{document}